\documentclass[12pt]{article}
\usepackage{newtxtext,newtxmath}
\usepackage{graphicx}
\usepackage[letterpaper,margin=1in]{geometry}
\newtheorem{definition}{Definition}
\newtheorem{proposition}{Proposition}

\renewenvironment{abstract}
	{\quotation}
	{\endquotation}

\date{}

\makeatletter
\renewcommand{\fnum@figure}{\textbf{Figure \thefigure}}
\renewcommand{\fnum@table}{\textbf{Table \thetable}}
\makeatother

\usepackage{scicite}
\usepackage{url}
\usepackage{caption}

\def\scititle{
	SheafIQ: Sheaf-Theoretic Information Quantification of Vector Fields on Geometric Graphs
}

\title{\bfseries \boldmath \scititle}
\author{
	Cong Shen$^{1}$,
	Guancen Lin$^{1\ast}$,
	Chuan-Shen Hu$^{2\ast}$\and
	\small$^{1}$ State Key Laboratory of Mathematical Sciences, Academy of Mathematics and Systems Science, \and \small Chinese Academy of Sciences, 100190, Beijing, China.\and\
	\small$^{2}$Department of Applied Mathematics, National University of Kaohsiung, 81148, Kaohsiung, Taiwan.\and
	\small$^\ast$Corresponding author. Email: linguancen@amss.ac.cn, chuanshenhu1@nuk.edu.tw\and
}

\begin{document} 

\maketitle

\begin{abstract}
Vector fields on graph structures naturally arise in diverse biological and engineered systems, where vector-valued states are defined on the nodes and evolve through the network interactions. Existing methods primarily characterize either the graph topology or individual signals, but generally do not quantify how local interactions among node-associated vectors are organized across the graph. To address this limitation, a sheaf-theoretic framework, termed SheafIQ, is proposed to represent neighboring vectors in a common edge-associated coordinate system, map local incompatibilities to a residual energy distribution, and quantify its global organization through entropy. Across proteins, functional brain networks, urban traffic systems, and power grids, SheafIQ consistently reveals complementary organizational information beyond conventional graph- and signal-based descriptors. More broadly, it establishes a unified information-theoretic framework for quantifying the organization of vector-valued states on geometric graphs, extending network analysis beyond graph topology alone.
\end{abstract}

\textbf{Keywords:} Cellular sheaf; Vector field on a geometric graph; Information quantification; Graph entropy; Complex network

\section{Introduction}

Complex systems are commonly represented as networks, in which nodes denote interacting entities and edges encode their structural or functional relationships \cite{strogatz2001exploring, boguna2009navigability, ruths2014control, artime2024robustness}. Beyond the underlying network connectivity, many real-world systems also involve vector-valued states or signals associated with the nodes. For example, proteins exhibit collective residue displacement fields during conformational motions, brain networks support spatially coordinated patterns of neural activity, traffic systems evolve through dynamic flow fields constrained by road connectivity, and power grids operate under continuously varying electrical states \cite{zheng2003comparative, park2013structural, bassett2017network, herty2003modeling, pagani2013power}. In these systems, the network topology specifies where interactions occur, whereas node-associated vectors describe the spatial organization of physical or functional states over the network. Consequently, system behavior is determined not only by the network topology, but also by how neighboring vector-valued states interact in the presence of additional geometric information (e.g., node coordinates). Understanding this organization is therefore essential for revealing the principles underlying the emergence, propagation, and reorganization of coordinated behaviors in complex systems.

Existing network analysis methods provide a diverse collection of tools for characterizing complex systems, including structural descriptors, graph entropy measures, graph signal processing techniques based on spectral and variation formulations, and application-specific statistical analyses \cite{donetti2005entangled, shen2025torsion, dehmer2011history, alon1996source, shen2026chi, sandryhaila2013discrete}. Meanwhile, many application-specific studies characterize node or edge states using statistical quantities such as signal magnitudes, pairwise differences, and global energy \cite{sandryhaila2014discrete, motsch2014heterophilious, yang2009protein}. These approaches have substantially advanced the analysis of network structure and graph-supported data. However, they primarily characterize either the underlying graph topology or the attributes of individual signals, without explicitly quantifying how local vector-valued interactions are spatially organized over the graph. Consequently, vector fields defined on the same underlying graph may exhibit fundamentally different organizations of local vector coordination, even when they share identical topology and similar overall signal statistics.

However, characterizing the organization of vector fields on graphs requires more than comparing signal magnitudes alone. In many systems, the physical interpretation of a vector is inherently determined by its local geometric context \cite{aminov2000geometry,de2016vector,crane2018discrete,qiu2022mapping,yao2022guiding}. For two neighboring nodes, vector differences parallel and perpendicular to their interaction may carry fundamentally different physical meanings. Consequently, directly comparing neighboring vectors in their original coordinate systems may fail to capture their underlying physical or functional compatibility. Instead, neighboring vectors should first be expressed in a common local reference frame associated with their interaction, allowing local compatibility to be distinguished from genuine local inconsistency. Once such local incompatibilities are identified, an equally important question arises: how are they organized across the entire network? Bridging local compatibility with global organization is essential for understanding the collective behavior of node states on complex networks.

To address these challenges, we introduce SheafIQ, a sheaf-theoretic framework for quantifying the local and global organization of vector fields on graphs. The central idea of SheafIQ is to bridge local vector compatibility and global organization of local incompatibilities within a unified framework. Locally, neighboring node states are represented in a common edge-associated coordinate system through a cellular sheaf, enabling their compatibility to be evaluated in a geometrically meaningful manner. Globally, the resulting sheaf residuals induce a residual energy field, whose normalized spatial distribution characterizes the organization of local incompatibilities over the graph. Rather than quantifying only the magnitude of local incompatibilities, SheafIQ measures their global organization through the entropy of this normalized residual energy distribution.

We demonstrate the generality and interpretability of SheafIQ across diverse biological and engineered systems. In protein systems, SheafIQ identifies mutation-induced dynamical reorganization, conformational hotspots, and long-range communication pathways. In functional brain networks, it reveals disease-associated organizations of local incompatibilities that remain reproducible across subjects and analysis settings. In urban traffic systems, SheafIQ characterizes large-scale spatiotemporal organization beyond conventional traffic descriptors, while in power grids, it captures localized electrical disturbances and identifies structurally critical components without requiring changes in network topology. Together, these results demonstrate that SheafIQ provides a unified and interpretable framework for characterizing vector fields on graphs across diverse application domains. More broadly, SheafIQ establishes a general information-theoretic framework for quantifying vector fields on graphs, extending network information analysis beyond graph topology to the organization of node states.

\section{Methods}

\subsection{Vector Fields on Geometric Graphs}

Most existing graph analysis methods characterize a network primarily through its topological structure, with nodes representing entities and edges representing pairwise relationships. Although this representation effectively describes structural connectivity, it is insufficient for systems in which each node also carries a vector-valued physical or functional state or signal. In many real-world systems, the graph topology specifies the interaction backbone, whereas the node-associated vectors describe how states, signals, forces, displacements, or flows are distributed over this backbone.

To formalize this setting, let $G=(V,E)$ be a finite, simple, undirected, and unweighted graph, where \(V\) is the node set and \(E\) is the edge set, and let $v\to e$ indicate that $v$ is an endpoint of $e$. When the nodes are indexed as $v_i$, an edge with endpoints $v_i$ and $v_j$ is denoted by $e_{ij}$. Each node \(v_i\in V\) is associated with a (column) vector $\mathbf{s}_i\in\mathbb{R}^{q}$ representing its local state or signal, called a \textit{node vector}. The collection of all node vectors is denoted by $\mathcal{S}=\{\mathbf{s}_i\}_{i\in V}$. We refer to the pair $(G,\mathcal{S})$ as a vector field on $G$, or simply a vector field when no ambiguity arises. In this formulation, the graph specifies the discrete pairwise relationships among the nodes, while the vector field describes the node vectors distributed over the graph.

This distinction is important because the same graph topology may give rise to very different organizations of node vectors. Conventional graph entropy measures, being determined solely by the connectivity pattern of \(G\), assign the same entropy value to graphs with identical topology, even when their node-associated vector fields differ substantially. However, in many systems, functional changes are primarily reflected in the organization of the underlying vector field rather than in the graph's topology. Therefore, the objective is not to quantify topology alone, but to characterize how node vectors are organized with respect to the underlying graph structure.

Rather than considering vector fields on graphs with purely combinatorial information, we further assume that the underlying graph is geometrically embedded in $\mathbb{R}^{d}$, where each node $i$ is assigned a coordinate $\mathbf{x}_i\in\mathbb{R}^{d}$. The coordinate \(\mathbf{x}_i\) provides a geometric reference for comparing vector quantities across adjacent nodes. In particular, for an edge \(e_{ij} = e_{ji} =\{ v_i, v_j \}\), the coordinates of its endpoints induce a local edge direction. This direction is essential for decomposing a node vector into its components parallel and orthogonal to the edge.

In general, the dimension $q$ of the node vectors need not equal the dimension $d$ of the geometric embedding. In the present setting, however, we assume $q=d$, so that each node vector can be decomposed into components parallel and orthogonal to the corresponding edge direction. 

Accordingly, the basic object studied in this work is not a graph alone, but a vector field on a geometric graph. The graph determines where local comparisons are made, the node coordinates provide a local geometric reference along each edge, and the node vectors specify the physical or functional quantities being compared. In particular, as developed in Section~\ref{Section: Cellular Sheaf Representation}, a sheaf-based framework is proposed to formulate the local compatibility between neighboring node vectors along graph edges. Building on this formulation, \emph{SheafIQ} (\emph{Sheaf-Theoretic Information Quantification}) provides an information-theoretic framework for quantifying the global organization of local incompatibilities (Section~\ref{Subsection: Information Quantification}), whose utility is demonstrated across diverse biological and engineered systems in Sections~\ref{Subsection: SheafIQ Reveals Mutation-Induced Dynamical Reorganization}--\ref{Subsection: Characterizing Power Grid Organization with SheafIQ}.
\subsection{Cellular Sheaf Representation}
\label{Section: Cellular Sheaf Representation}

To characterize how node vectors interact across a graph, we introduce a cellular sheaf defined on the underlying graph. Unlike a conventional graph, which only encodes adjacency relationships, a cellular sheaf additionally equips each cell with a local vector space and specifies how information is represented and compared between neighboring cells \cite{curry2014sheaves,curry2015tdacosheaf,curry2016discretemorse, hansen2019toward, battiloro2024tangent, bodnar2022neural,cooperband2024homology,Cooperband2023Homological}. Accordingly, the graph describes where interactions occur, whereas the sheaf specifies how vector-valued information is represented and compared across these interactions. 

\subsubsection{Cellular Sheaves on Geometric Graphs}

Let $G=(V,E)$ be a finite, simple, undirected, and unweighted graph with node set $V$ and edge set $E$. A \textit{cellular sheaf} of real vector spaces on \(G\) consists of the following data:
\begin{itemize}
\item[\rm (a)] \(\mathcal{F}(v)\) denotes the vector space attached to node \(v\);
\item[\rm (b)] \(\mathcal{F}(e)\) denotes the vector space attached to edge \(e\);
\item[\rm (c)] \(\mathcal{F}(v\rightarrow e):\mathcal{F}(v)\rightarrow\mathcal{F}(e)\) is an $\mathbb{R}$-linear transformation from $\mathcal{F}(v)$ to $\mathcal{F}(e)$.
\end{itemize}
The spaces \(\mathcal{F}(v)\) and \(\mathcal{F}(e)\) are called the \textit{stalk spaces} at the node \(v\) and the edge \(e\), respectively. For each endpoint \(v\) of an edge \(e\), the linear map \(\mathcal{F}_{v\rightarrow e}:\mathcal{F}(v)\rightarrow\mathcal{F}(e)\) is called the \textit{restriction map}, which maps the information in \(\mathcal{F}(v)\) to the stalk space \(\mathcal{F}(e)\). This data is summarized by the triple
\[
\mathcal{F}
=
\left(
\{\mathcal{F}(v)\}_{v\in V},
\{\mathcal{F}(e)\}_{e\in E},
\{\mathcal{F}(v\rightarrow e)\}
\right).
\]
Beyond the topological structure of the underlying graph, SheafIQ also incorporates its geometric embedding. Specifically, each node is assigned a coordinate in $\mathbb{R}^d$, and the node-associated vectors (i.e., states or signals) are assumed to lie in the same ambient space. That is, $q=d$.

Unlike node states (or signals), edge states are not meant to preserve the original vector representation. Instead, through the restriction maps, they provide a common computational space in which the states at the two incident nodes can be directly compared. Accordingly, the edge stalk space is defined by $\mathcal{F}(e)=\mathbb{R}^{2d}$ for each edge $e\in E$, where the two components correspond to the projections of the node states onto the directions parallel and orthogonal to the edge, respectively (see Section~\ref{subsubsection: Construction of Restriction Maps}).

Compared with a conventional graph, a cellular sheaf enriches the graph structure by introducing local vector spaces together with restriction maps. Consequently, neighboring node states are no longer compared solely in the ambient Euclidean space, but in the stalk space of their shared edge. This enables a more refined notion of local compatibility (see Section~\ref{subsubsection: Local Compatibility}).

\subsubsection{Construction of Restriction Maps}
\label{subsubsection: Construction of Restriction Maps}

To define a cellular sheaf on a graph $G=(V,E)$ with node set $V=\{v_1,\ldots,v_n\}$, we assume that each node $v_i$ is assigned a coordinate $\mathbf{x}_i\in\mathbb{R}^d$. For every edge $e_{ij}=\{v_i,v_j\}$, the restriction maps $\mathcal{F}(v_i\to e_{ij}):\mathcal{F}(v_i)\to\mathcal{F}(e_{ij})$ are defined for its two endpoints. The node coordinates naturally determine the direction vector of the relation $v_i \to e_{ij}$:
\[
\mathbf{u}_{ij}
=
\frac{\mathbf{x}_j-\mathbf{x}_i}
{\|\mathbf{x}_j-\mathbf{x}_i\|}.
\]

The (normalized) direction vector establishes a local reference frame for edge \(e_{ij}\). The restriction map is defined as an \emph{edge-induced coordinate transformation} that expresses node vectors in the local coordinate system determined by the edge. Under the assumption \(q=d\), the node vector and the edge direction belong to the same ambient space. For a node vector $\mathbf{s}_i\in\mathcal{F}(v_i)$, the restriction map $\mathcal{F}(v_i\to e_{ij})$ is represented by the $2d\times1$ matrix
\begin{equation}
\label{Eq. Matrix representation of the restriction map}
\mathcal{F}(v_i\to e_{ij})
(\mathbf{s}_i)
=
\begin{bmatrix}
(\mathbf{s}_i^{\top}\mathbf{u}_{ij})\mathbf{u}_{ij}
\\[1ex]
\mathbf{s}_i-
(\mathbf{s}_i^{\top}\mathbf{u}_{ij})
\mathbf{u}_{ij}
\end{bmatrix}
\in\mathcal{F}(e_{ij}),
\end{equation}
where the first block is the signed projection of $\mathbf{s}_i$ onto the edge direction, while the second block is its component orthogonal to the edge direction. Note that although $\mathbf{u}_{ij}=-\mathbf{u}_{ji}$, the corresponding restriction maps coincide, since the projection onto the line spanned by $\mathbf{u}_{ij}$ is invariant under reversing its orientation. That is,
\begin{equation}\label{Eq. F(vi to eij) = F(vj to eij)}
\mathcal{F}(v_i\to e_{ij}) =  \mathcal{F}(v_j\to e_{ij}).   
\end{equation}

Consequently, the restriction map transforms an arbitrary vector
$\mathbf{s}_i\in\mathbb{R}^{d}$ into a unified $2d$-dimensional edge representation,
\[
\mathcal{F}(v_i\to e_{ij}):
\mathbb{R}^{d}
\longrightarrow
\mathbb{R}^{2d},
\]
thereby enabling vectors associated with different nodes to be compared
within a common local coordinate system.

\subsubsection{Local Compatibility}
\label{subsubsection: Local Compatibility}

One fundamental principle of cellular sheaves is to capture the local compatibility among all node states (or signals) across neighboring cells while characterizing the global deviation of the entire state from harmonicity or compatibility \cite{hansen2019toward, hansen2019learning, robinson2020assignments, moustafa2026finite}. For an edge \(e_{ij} = \{ v_i, v_j \} \), the restricted vectors
\[
\mathcal{F}(v_i\to e_{ij})(\mathbf{s}_i),
\qquad
\mathcal{F}(v_j\to e_{ij})(\mathbf{s}_j)
\]
are regarded as locally compatible if
\[
\mathcal{F}(v_i\to e_{ij})(\mathbf{s}_i)
=
\mathcal{F}(v_j\to e_{ij})(\mathbf{s}_j).
\]
This condition indicates that the two node vectors exhibit identical local behavior when expressed in the coordinate system induced by the connecting edge. Conversely, whenever
\[
\mathcal{F}(v_i\to e_{ij})(\mathbf{s}_i)
\neq
\mathcal{F}(v_j\to e_{ij})(\mathbf{s}_j),
\]
the edge exhibits local incompatibility, indicating that the incident node vectors differ in their edge-oriented representations. This notion of local compatibility forms the foundation of the proposed framework. In the next section, local incompatibility is quantified through sheaf residuals, which provide a local measure of vector inconsistency on each graph edge.

\subsection{Sheaf Residuals}

The cellular sheaf introduced in the previous section provides a common local representation for neighboring node vectors. Once all node vectors have been expressed in the same edge-associated stalk space, local compatibility can be quantified continuously. To this end, we introduce the sheaf residual, which provides a quantitative measure of local inconsistency on each graph edge.

\subsubsection{Quantifying Local Incompatibility}

For an edge $e_{ij}=\{ v_i, v_j \}$, the restriction maps produce two edge representations,
\[
\mathcal{F}(v_i\to e_{ij})(\mathbf{s}_i),
\qquad
\mathcal{F}(v_j\to e_{ij})(\mathbf{s}_j),
\]
which belong to the same edge stalk space $\mathcal{F}(e)$.

If $\mathcal{F}(v_i\to e_{ij})(\mathbf{s}_i) = \mathcal{F}(v_j\to e_{ij})(\mathbf{s}_j)$, then the two node vectors are locally compatible on the edge \(e_{ij}\). However, exact local compatibility rarely occurs in practice. Instead of enforcing this equality, we define the difference between the two restricted vectors as the \emph{sheaf residual},
\begin{equation}\label{Eq. Sheaf residual}
\mathbf{t}_{ij} = \mathcal{F}(v_j\to e_{ij})(\mathbf{s}_j) - \mathcal{F}(v_i\to e_{ij})(\mathbf{s}_i).    
\end{equation}

Unlike the ordinary Euclidean difference \(\mathbf{s}_j-\mathbf{s}_i\), the sheaf residual is defined by comparing the two node vectors only after they have been transformed into a common edge-induced coordinate system. Consequently, it quantifies the local incompatibility of neighboring vectors relative to the underlying graph geometry.

\subsubsection{Parallel--Orthogonal Residual Decomposition}

Since the restriction map consists of parallel and orthogonal components~\eqref{Eq. Matrix representation of the restriction map}, the sheaf residual defined in~\eqref{Eq. Sheaf residual} admits the decomposition
\[
\mathbf{t}_{ij}
=
\begin{bmatrix}
\mathbf{t}_{ij}^{\parallel}
\\[1ex]
\mathbf{t}_{ij}^{\perp}
\end{bmatrix},
\]
where
\[
\mathbf{t}_{ij}^{\parallel}
=
(\mathbf{s}_j^{\top}\mathbf{u}_{ij} - \mathbf{s}_i^{\top}\mathbf{u}_{ij}) \mathbf{u}_{ij}
\]
denotes the mismatch along the edge direction, and
\[
\mathbf{t}_{ij}^{\perp} = (\mathbf{s}_j- (\mathbf{s}_j^{\top}\mathbf{u}_{ij}) \mathbf{u}_{ij}) - (\mathbf{s}_i - (\mathbf{s}_i^{\top}\mathbf{u}_{ij})
\mathbf{u}_{ij})
\]
measures the difference between the orthogonal components. Accordingly,
\[
\mathbf{t}_{ij}
=
\begin{bmatrix}
(\mathbf{s}_j^{\top}\mathbf{u}_{ij} - \mathbf{s}_i^{\top}\mathbf{u}_{ij})\mathbf{u}_{ij}
\\[2ex]
(\mathbf{s}_j - \mathbf{s}_i) + ((\mathbf{s}_i^{\top} - \mathbf{s}_j^{\top})\mathbf{u}_{ij})
\mathbf{u}_{ij}
\end{bmatrix}.
\]

This decomposition reveals two complementary sources of local incompatibility. The first component characterizes the mismatch of vector behavior along the edge direction, while the second component quantifies the mismatch perpendicular to the edge. Together they provide a complete edge-oriented description of neighboring vector consistency.

\subsubsection{Residual Norm as Edge Incompatibility}
To obtain a scalar measure of local inconsistency, we define the $\ell^2$ norm of the sheaf residual by
\begin{equation}\label{Eq. Full sheaf residual norm}
\left\|
\mathbf{t}_{ij}
\right\|
= \sqrt{
\left\| \mathbf{t}_{ij}^{\parallel}\right\|^2 + \left\|  \mathbf{t}_{ij}^{\perp}\right\|^2
}.    
\end{equation}

The residual norm satisfies $\left\|
\mathbf{t}_{ij}
\right\|\ge0$, where $\left\|
\mathbf{t}_{ij}
\right\|=0$ indicates perfect local compatibility on edge \(e_{ij}\), and larger values correspond to stronger local incompatibility between the two incident node vectors.  

From another perspective, the proposed sheaf residual is naturally interpreted through the sheaf coboundary operator. Specifically, for any cellular sheaf $\mathcal{F}$ on a finite graph $G$, the sheaf residuals $\mathbf{t}_{ij}$ coincide with the edge-wise components of $\mathbf{C}_{\mathcal{F}}\mathbf{s}$, where $\mathbf{C}_{\mathcal{F}}$ denotes the $0$th sheaf coboundary matrix and $\mathbf{s}$ is the concatenation of the node vectors~\cite{Hu2026Sheaf}. This interpretation is also closely related to the \textit{quadratic potential function}~\cite{zhao2025} and the \textit{sheaf Dirichlet energy}~\cite{bodnar2022neural}, both of which are formulated from the norm (or quadratic form) of $\mathbf{C}_{\mathcal{F}}\mathbf{s}$ to aggregate local inconsistencies into a global energy. While the global residual energy provides an overall measure of incompatibility, our framework also retains the individual edge-wise residuals $\mathbf{t}_{ij}$ together with their associated local scalar quantities, enabling their physical, biological, and social interpretations to be investigated through the distribution and entropy of these local quantities.

Similar edge incompatibility measures can be defined by emphasizing different geometric characteristics of the node vectors. For example, if the directional discrepancy between the perpendicular components is not of primary interest, one may instead compare only their magnitudes and define

\begin{equation}\label{Eq. Sheaf residue energy}
r_{ij} = \sqrt{ \left\|\mathbf{t}_{ij}^{\parallel}\right\|^2 +  \left(\left\|\mathbf{s}_j- (\mathbf{s}_j^{\top}\mathbf{u}_{ij}) \mathbf{u}_{ij}\right\| - \left\|\mathbf{s}_i - (\mathbf{s}_i^{\top}\mathbf{u}_{ij})
\mathbf{u}_{ij}\right\|\right)^2}    
\end{equation}

By the reverse triangle inequality, $r_{ij}\leq \|\mathbf{t}_{ij}\|$. Compared with the full sheaf residual norm~\eqref{Eq. Full sheaf residual norm}, $r_{ij}$ suppresses directional variations within the perpendicular subspace and retains only magnitude differences. In particular, by~\eqref{Eq. F(vi to eij) = F(vj to eij)}, if the local configurations associated with $\mathbf{s}_i$ and $\mathbf{s}_j$ differ only by a rigid motion (translation or rotation) on the edge $e_{ij}$, then $r_{ij}=0$. Consequently, $r_{ij}$ reduces the influence of orientation-dependent fluctuations and provides a more stable edge-wise descriptor for subsequent statistical analysis. In the applications presented in this paper, we adopt $r_{ij}$ as the edge incompatibility measure.

\subsubsection{Interpretation of the Sheaf Residual}

The proposed sheaf residual quantifies the deviation from local sheaf compatibility by expressing neighboring node vectors in a common edge-associated coordinate system. Unlike a conventional vector difference, it measures incompatibility after transporting local vector information to a common edge stalk.

From the perspective of cellular sheaf theory, the collection of all residuals
\begin{equation}\label{Eq. Collection of all residuals}
\mathcal{T}
=
\left\{
\mathbf{t}_{ij}
\mid
e_{ij}\in E
\right\}    
\end{equation}
characterizes how much the vector field $(G,\mathcal{S})$ departs from a globally compatible assignment of node vectors (i.e., a \textit{global section})~\cite{hansen2019toward,hansen2019learning,robinson2020assignments,moustafa2026finite}. Consequently, the residual field provides a local description of incompatibility throughout the graph.

However, characterizing the organization of a vector field on a graph requires considering not only the magnitude of individual residuals, but also how these residuals are distributed across the graph. To capture this global organization, the next section converts edge residuals into residual energies and quantifies their distribution using an entropy-based formulation.

\subsection{Sheaf Residual Energy Field}
The sheaf residual introduced in the previous section quantifies local incompatibility on each graph edge. However, residuals themselves only describe local discrepancies and cannot directly characterize the global organization of a vector field on a geometric graph. To bridge local incompatibility and global information-theoretic analysis, we further introduce a residual energy field defined over the graph edges.

\subsubsection{Residual Energy}

For each edge $e_{ij}\in E$, let $r_{ij}$ denote the measurement of $e_{ij}$ induced by the sheaf residual norm~\eqref{Eq. Sheaf residue energy}. Rather than using the residual magnitude directly, we define the corresponding residual energy by
\[
\varepsilon_{ij}=r_{ij}^{2}.
\]
The quadratic form guarantees non-negativity, allowing every edge to contribute positively to the overall incompatibility of the graph. At the same time, it naturally emphasizes large local incompatibilities while suppressing small fluctuations, making the resulting energy more sensitive to structurally significant deviations. Quadratic energy functionals are also widely used in graph signal processing, elasticity theory, and sheaf Laplacian formulations, where they provide a natural measure of local inconsistency~\cite{martinez2014smoothed,sloboda2026sheaf,osting2014minimal,ji2026does,chen2024manifold}. Consequently, the proposed residual energy admits both a natural physical interpretation and a mathematically well-established formulation for quantifying edge-wise incompatibility.

\subsubsection{Sheaf Residual Energy Field}
The residual energies associated with the edges naturally define a scalar field on the graph,
\begin{equation}\label{Eq. Sheaf residual energy field}
\mathcal{E}
=
\left\{
\varepsilon_{ij}
\mid
e_{ij}\in E
\right\},    
\end{equation}
which we refer to as the \emph{sheaf residual energy field}. This field may equivalently be regarded as a non-negative function
\[
\varepsilon:
E
\rightarrow
\mathbb{R}_{\geq0},
\]
assigning a residual energy to each edge of the graph. Unlike the collection of residuals $\mathcal{T}$ (see~\eqref{Eq. Collection of all residuals}), which encodes vector-valued local incompatibilities, the sheaf residual energy field provides a scalar representation of their magnitudes. Consequently, the vector field on the graph is transformed into an edge-wise energy landscape, revealing the spatial distribution of local incompatibilities.

\subsubsection{Global Residual Energy}

Given an arbitrary cellular sheaf on a graph $G=(V,E)$, the corresponding collection of sheaf residuals $\mathcal{T}$ naturally defines the quadratic functional
\[
\Phi(\mathcal{T})
=
\sum_{e_{ij}\in E}
\|\mathbf{t}_{ij}\|^{2}.
\]
Each term $\|\mathbf{t}_{ij}\|^{2}$ represents the local contribution of the edge $e_{ij}$ to the global residual energy. This formulation is closely related to existing quadratic energy functionals in sheaf theory, including the quadratic potential function, sheaf Dirichlet energy, and sheaf norm; see~\cite{zhao2025,Hu2026Sheaf,bodnar2022neural}.

Based on the sheaf residual energy field defined in \eqref{Eq. Sheaf residual energy field}, the \textit{total sheaf residual energy} (or just \textit{total residual energy}) of the vector field $(G,\mathcal{S})$ is defined by
\[
E_{\mathrm{tot}}
=
\sum_{e_{ij}\in E}
\varepsilon_{ij}.
\]
The quantity $E_{\mathrm{tot}}$ measures the overall level of local incompatibility across the graph.

However, the total residual energy alone cannot distinguish different organizational patterns. Two collections of residuals may have the same total energy while exhibiting completely different spatial distributions of local incompatibility. Therefore, the total residual energy alone is insufficient to characterize the organization of the residual field.

\subsubsection{Residual Energy Measure}

The sheaf residual energy field naturally induces a finite measure on the measurable space $(E,2^E)$. Specifically, the \emph{residual energy measure} is defined as the measure $\mu:2^E\longrightarrow\mathbb{R}_{\geq0}$ defined by
\begin{equation}\label{Eq. Residual energy measure}
\mu(A)=\sum_{e_{ij}\in A}\varepsilon_{ij}, \qquad A\subseteq E.
\end{equation}
Clearly, $\mu$ is finitely additive; that is, $\mu(A\cup B)=\mu(A)+\mu(B)$ whenever $A$ and $B$ are disjoint subsets of $E$. Hence, $\mu$ is a finite measure on the finite measurable space $(E,2^E)$. In particular, for any edge $e_{ij}\in E$,
\[
\mu(\{e_{ij}\})=\varepsilon_{ij},
\]
and
\[
\mu(E)=\sum_{e_{ij}\in E}\varepsilon_{ij}=E_{\mathrm{tot}}.
\]
This measure-theoretic formulation provides a natural bridge between local sheaf residuals and global information-theoretic analysis. The residual energy measure encodes not only the magnitude of local incompatibility but also its distribution across subsets of graph edges. 

\subsection{Information Quantification}
\label{Subsection: Information Quantification}
The residual energy measure introduced in \eqref{Eq. Residual energy measure} provides a measure-theoretic description of how local incompatibilities are distributed over graph edges. To obtain a probabilistic description, we normalize this finite measure to define the \emph{sheaf residual organization}. Based on this probability measure, we further introduce an entropy-based scalar descriptor that quantifies the organization of the residual energy distribution.

\subsubsection{Normalized Residual Energy Measure}

For the finite residual energy measure $\mu:2^E\longrightarrow\mathbb{R}_{\ge0}$ defined in \eqref{Eq. Residual energy measure}, assume that $\mu(E)=E_{\mathrm{tot}}>0$. We normalize $\mu$ to obtain a probability measure on the measurable space $(E,2^E)$. Specifically, for any subset of edges $A\subseteq E$, define
\begin{equation}\label{Eq. probability measure}
P(A)=\frac{\mu(A)}{\mu(E)} =\frac{\mu(A)}{E_{\mathrm{tot}}}.    
\end{equation}
Then $P$ is a well-defined probability measure on $(E,2^E)$. In particular, the probability assigned to each edge $e_{ij}\in E$ is
\[
p_{ij}:=P(\{e_{ij}\})
=\frac{\varepsilon_{ij}}{E_{\mathrm{tot}}},
\]
and these probabilities satisfy
\[
p_{ij}\ge0,
\qquad
\sum_{e_{ij}\in E}p_{ij}=1.
\]
When $E_{\mathrm{tot}}=0$, the residual energy vanishes on every edge; that is, $\varepsilon_{ij}=0$ (equivalently, $r_{ij}=0$) for all $e_{ij}\in E$. In this degenerate case, the vector field is compatible, and the associated probability measure is treated separately.

The normalization removes the influence of the total residual magnitude while preserving the relative organization of residual energy across graph edges. The resulting probability measure $P$, called the \emph{sheaf residual organization}, characterizes the organization of the residual energy distribution and is defined as follows.

\begin{definition}[Sheaf residual organization]
For a vector field \((G,\mathcal{S})\) with finite and nonzero total residual energy, its \textbf{sheaf residual organization} is defined as the normalized residual energy measure
\[
\mathcal{O}(G,\mathcal{S})
=
P.
\]
\end{definition}

This definition separates the magnitude of local incompatibility from its spatial organization. The total residual energy $E_{\mathrm{tot}}$ quantifies the overall amount of incompatibility, whereas $\mathcal{O}(G,\mathcal{S})$ records its normalized distribution over the edge set. Consequently, the sheaf residual organization $\mathcal{O}(G,\mathcal{S})$ is invariant under uniform rescaling of residual energies and depends only on their relative distribution (Proposition \ref{Proposition: Scale invariance of sheaf residual organization}). 

\subsubsection{Entropy-Based Quantification of Residual Organization}

Based on the normalized residual energy measure \(P\) defined in \eqref{Eq. probability measure}, we define the entropy descriptor \cite{bromiley2004shannon} used in SheafIQ, referred to as the \emph{Sheaf Residual Entropy} (SRE), by
\begin{equation}
H=-\sum_{e\in E} p_{e}\log p_{e}.    
\end{equation}
Since the entropy is computed from the normalized residual energy measure, it quantifies the uncertainty of the sheaf residual organization.

To eliminate the influence of graph size, we further define the normalized entropy for graphs with $|E|>1$ by
\[
\widehat{H}=\frac{H}{\log |E|},
\]
where $0\leq \widehat{H}\leq 1$. The assumption $|E|>1$ ensures that $\log |E|>0$, so the normalization is well-defined. The normalized entropy removes the trivial dependence of the entropy on the size of the edge set, allowing the residual organization of vector fields defined on graphs with different numbers of edges to be compared on a common scale.

\begin{proposition}[Bounds and Extremal Cases of SRE]
\label{Proposition: Bounds and Extremal Cases of SRE}
For any vector field \((G,\mathcal{S})\) with nonzero total residual energy and $|E|>1$,
\begin{equation}\label{Eq. Shannon Entropy Inequality} 
0\leq H\leq\log |E|.    
\end{equation}
Equivalently, dividing both sides by $\log |E|$ yields
\[
0\leq\widehat{H}\leq1.
\]
The lower bound is attained if and only if the normalized residual energy measure is concentrated on a single edge; that is, there exists an edge $e^\ast\in E$ such that
\begin{equation}\label{Eq. Lower bound condition}
p_{e^\ast}=1,
\qquad
p_e=0
\quad
\text{for all } e\neq e^\ast.    
\end{equation}
On the other hand, the upper bound is attained if and only if the residual energy is uniformly distributed over all edges, that is,
\[
p_e
=
\frac{1}{|E|}
\quad
\text{for all } e\in E.
\]
\end{proposition}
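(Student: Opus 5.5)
The proof reduces to the standard bounds for Shannon entropy of a finite probability distribution, applied to the probability vector $(p_e)_{e\in E}$ from \eqref{Eq. probability measure}. Since $E_{\mathrm{tot}}>0$, this vector is well defined, with $p_e\ge 0$ and $\sum_e p_e=1$. I adopt the standard convention $0\log 0=0$, justified by continuity since $t\log t\to 0$ as $t\to 0^+$. Everything will be argued termwise or through a single convexity inequality. The normalized statement $0\le\widehat H\le 1$ then follows by dividing by $\log|E|>0$, which holds because $|E|>1$.

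\emph{Lower bound.} Each $p_e\in[0,1]$, so $\log p_e\le 0$ whenever $p_e>0$. Hence every summand $-p_e\log p_e$ is nonnegative, and $H\ge 0$. Equality forces every summand to vanish. The function $t\mapsto -t\log t$ is zero on $[0,1]$ only at $t=0$ and $t=1$, so every $p_e\in\{0,1\}$. Since the $p_e$ sum to $1$, exactly one of them equals $1$, which is condition \eqref{Eq. Lower bound condition}. Conversely, that configuration clearly gives $H=0$.

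\emph{Upper bound.} Let $S=\{e: p_e>0\}$ be the support; it is nonempty. I apply Jensen's inequality to the strictly concave function $\log$, with weights $p_e$ on $S$:
\[
H=\sum_{e\in S}p_e\log\frac{1}{p_e}\le \log\Bigl(\sum_{e\in S}p_e\cdot\frac{1}{p_e}\Bigr)=\log|S|\le\log|E|.
\]
An equivalent route is Gibbs' inequality, i.e.\ nonnegativity of the Kullback--Leibler divergence between $P$ and the uniform distribution on $E$, since $\log|E|-H=\sum_e p_e\log(|E|p_e)\ge 0$.

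\emph{Equality case for the upper bound.} This is the only step needing care, because two inequalities must both be tight. Strict concavity makes the Jensen step an equality only if $1/p_e$ is constant on $S$, i.e.\ $p_e=1/|S|$ there. The second inequality is an equality only if $|S|=|E|$, i.e.\ every edge carries positive energy. Together these give $p_e=1/|E|$ for all $e\in E$. Conversely, substituting the uniform distribution yields $H=\log|E|$ directly, which completes both extremal characterizations.
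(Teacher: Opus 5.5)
Your proof is correct and follows essentially the same route as the paper: termwise nonnegativity of $-p_e\log p_e$ gives the lower bound and its equality case, and nonnegativity of the Kullback--Leibler divergence from the uniform distribution on $E$ (equivalently, your Jensen argument) gives the upper bound. Your restriction to the support $S$ handles zero-energy edges and the upper-bound equality case more explicitly than the paper, which simply cites the equality condition $P=U$ of the Kullback--Leibler inequality.
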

\noindent\textit{Proof.}
Let $U$ be the uniform distribution on the edge set $E$. The Kullback--Leibler divergence between the probability distributions $P$ and $U$ leads to the following inequality:
\begin{equation}\label{Eq. KL Divergence}
\operatorname{D}_{\operatorname{KL}}(P\Vert U) = \sum_{e\in E}
p_e\log \left( \frac{p_e}{1/n}  \right)
\geq 0.    
\end{equation}
Equivalently, we have
\begin{equation*}
\sum_{e\in E}
p_e\log \left( p_e  \right) +  \log |E| =
\sum_{e\in E}
p_e\log \left( p_e  \right) + \sum_{e\in E}
p_e\log \left( n  \right) \geq 0     
\end{equation*}
Since \(P\) is a probability measure on the finite edge set \(E\), the Shannon entropy satisfies
\[
0
\leq
-
\sum_{e\in E}
p_e\log p_e
\leq
\log |E|.
\]
Since $-p_e \log p_e$ is always nonnegative, the lower bound is achieved if and only if $-p_e \log p_e = 0$ for all $e \in E$, which is equivalent to the condition \eqref{Eq. Lower bound condition}. 

On the other hand, the equality of the Kullback--Leibler divergence in \eqref{Eq. KL Divergence} holds if and only if $p_e = 1/n$ for all $e \in E$, which is equivalent to $\sum_{e\in E} p_e\log \left( p_e  \right) +  \log |E| = 0$, i.e., the upper bound of \eqref{Eq. Shannon Entropy Inequality} is attained. 
\hfill$\square$

\begin{proposition}[Scale invariance of sheaf residual organization]\label{Proposition: Scale invariance of sheaf residual organization}
Let $\mathcal{E}=\{\varepsilon_e\}_{e\in E}$ be the sheaf residual energy field of a vector field $(G,\mathcal{S})$ on a geometric graph $G = (V,E)$. For any constant $c>0,$ define a scaled energy field
\[
\mathcal{E}^{(c)}
=
\{c\varepsilon_e\}_{e\in E}.
\]
Then the normalized residual energy measure remains unchanged:
\[
P^{(c)}
=
P.
\]
Consequently,
\[
H^{(c)}
=
H,
\qquad
\widehat{H}^{(c)}
=
\widehat{H}.
\]
\end{proposition}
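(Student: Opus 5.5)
The plan is to compute everything directly from the definitions \eqref{Eq. Residual energy measure} and \eqref{Eq. probability measure}; no earlier result is needed beyond the definitions of $P$, $H$ and $\widehat{H}$.

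First I will set up the scaled quantities. As in the definition of the sheaf residual organization, I assume $E_{\mathrm{tot}}>0$. The scaled field $\mathcal{E}^{(c)}$ induces a measure $\mu^{(c)}(A)=\sum_{e\in A}c\,\varepsilon_e=c\,\mu(A)$ for every $A\subseteq E$. In particular,
\[
\mu^{(c)}(E)=c\,E_{\mathrm{tot}}>0,
\]
because $c>0$. So the scaled field also has nonzero total energy, and $P^{(c)}$ is well defined.

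Next I will show the measures coincide. For every $A\subseteq E$,
\[
P^{(c)}(A)=\frac{\mu^{(c)}(A)}{\mu^{(c)}(E)}=\frac{c\,\mu(A)}{c\,\mu(E)}=P(A).
\]
Equivalently, edgewise, $p_e^{(c)}=c\varepsilon_e/(cE_{\mathrm{tot}})=p_e$ for all $e\in E$.

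Finally I will deduce the entropy statements. $H$ is a function of the probability vector $(p_e)_{e\in E}$ only, with the convention $0\log 0=0$. Hence $H^{(c)}=H$. The normalizing constant $\log|E|$ depends only on the edge set, which scaling leaves unchanged, so $\widehat{H}^{(c)}=\widehat{H}$ whenever $|E|>1$.

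None of these steps is a real obstacle. The only points to state explicitly are that $c>0$ keeps the total energy nonzero, so the normalization stays legitimate, and that the zero-probability edges are handled by the same $0\log 0$ convention in both $H$ and $H^{(c)}$.
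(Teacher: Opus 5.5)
Your proposal is correct and matches the paper's proof: both cancel $c$ in $p_e^{(c)} = c\varepsilon_e/\sum_{e'\in E} c\varepsilon_{e'}$, then conclude that $H$ and $\widehat{H}$ are unchanged because they depend only on $(p_e)_{e\in E}$ and $|E|$. Your extra points are small details the paper leaves implicit: $c>0$ keeps the total energy nonzero, so $P^{(c)}$ is well defined; the set-level identity $P^{(c)}(A)=P(A)$ holds; and the $0\log 0$ convention is applied consistently.
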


\noindent\textit{Proof.}
For any edge \(e\in E\), the normalized probability under the scaled energy field is
\[
p_e^{(c)}
=
\frac{c\varepsilon_e}
{\sum_{e'\in E}c\varepsilon_{e'}}
=
\frac{\varepsilon_e}
{\sum_{e'\in E}\varepsilon_{e'}}
=
p_e.
\]
Therefore, the normalized residual energy measure is unchanged, and the entropy computed from this measure is also unchanged.
\hfill$\square$

\begin{definition}[SheafIQ operator]
Let \(\mathfrak{G}\) denote the class of vector fields on graphs with finite and nonzero residual energy and with more than one edge. The \textbf{SheafIQ operator} is defined as the assignment
\[
\mathcal{H}_{\mathrm{SheafIQ}}
:
\mathfrak{G}
\longrightarrow
[0,1],
\]
with
\[
\mathcal{H}_{\mathrm{SheafIQ}}(G,\mathcal{S})
=
\widehat{H}.
\]
Thus, \(\mathcal{H}_{\mathrm{SheafIQ}}\) maps each vector field to a normalized information-theoretic descriptor of its sheaf residual organization.
\end{definition}

\subsubsection{Local Incompatibility and Global Organization}

Propositions \ref{Proposition: Bounds and Extremal Cases of SRE} and \ref{Proposition: Scale invariance of sheaf residual organization} clarify the information-theoretic role of the proposed entropy. In particular, the entropy characterizes the relative organization of local sheaf incompatibilities over graph edges rather than their absolute magnitude.

Intuitively, large entropy indicates that residual energy is broadly distributed across many edges, whereas small entropy indicates that it is concentrated on only a few edges. When the residual energy is concentrated on a small number of edges, the probability distribution \(p_{ij}\) becomes highly localized, resulting in a relatively small entropy. This indicates that local incompatibilities are confined to a limited portion of the graph. Conversely, when the residual energy is distributed more uniformly across the edge set, the entropy increases, indicating that local incompatibilities are dispersed more broadly throughout the graph.

Consequently, the entropy characterizes how local incompatibilities are globally organized and provides a quantitative descriptor of the distribution of residual energy in graph-based vector fields. Its scale invariance further ensures that vector fields with proportional residual energies have the same entropy, provided that the relative distribution of residual energy across the graph remains unchanged.

This interpretation distinguishes SheafIQ from conventional graph entropy measures that are derived solely from graph topology, degree distributions, or graph spectra. In contrast, SheafIQ explicitly incorporates the compatibility of vector-valued data under the local coordinate transformations encoded by the sheaf structure.

Overall, SheafIQ consists of two complementary components. The sheaf representation models local vector compatibility through edge-induced coordinate transformations, whereas the entropy summarizes the global organization of the resulting residual energy distribution. In this way, SheafIQ provides an information-theoretic characterization of vector fields on geometric graphs by connecting local sheaf geometry with the global organization of local incompatibilities.

\clearpage

\section{Results}
\subsection{SheafIQ Reveals Mutation-Induced Dynamical Reorganization}
\label{Subsection: SheafIQ Reveals Mutation-Induced Dynamical Reorganization}

\begin{figure}[htbp]
	\captionsetup{font=footnotesize, justification=justified,  singlelinecheck=false, skip=3pt}
	\includegraphics[width=1.0\textwidth]{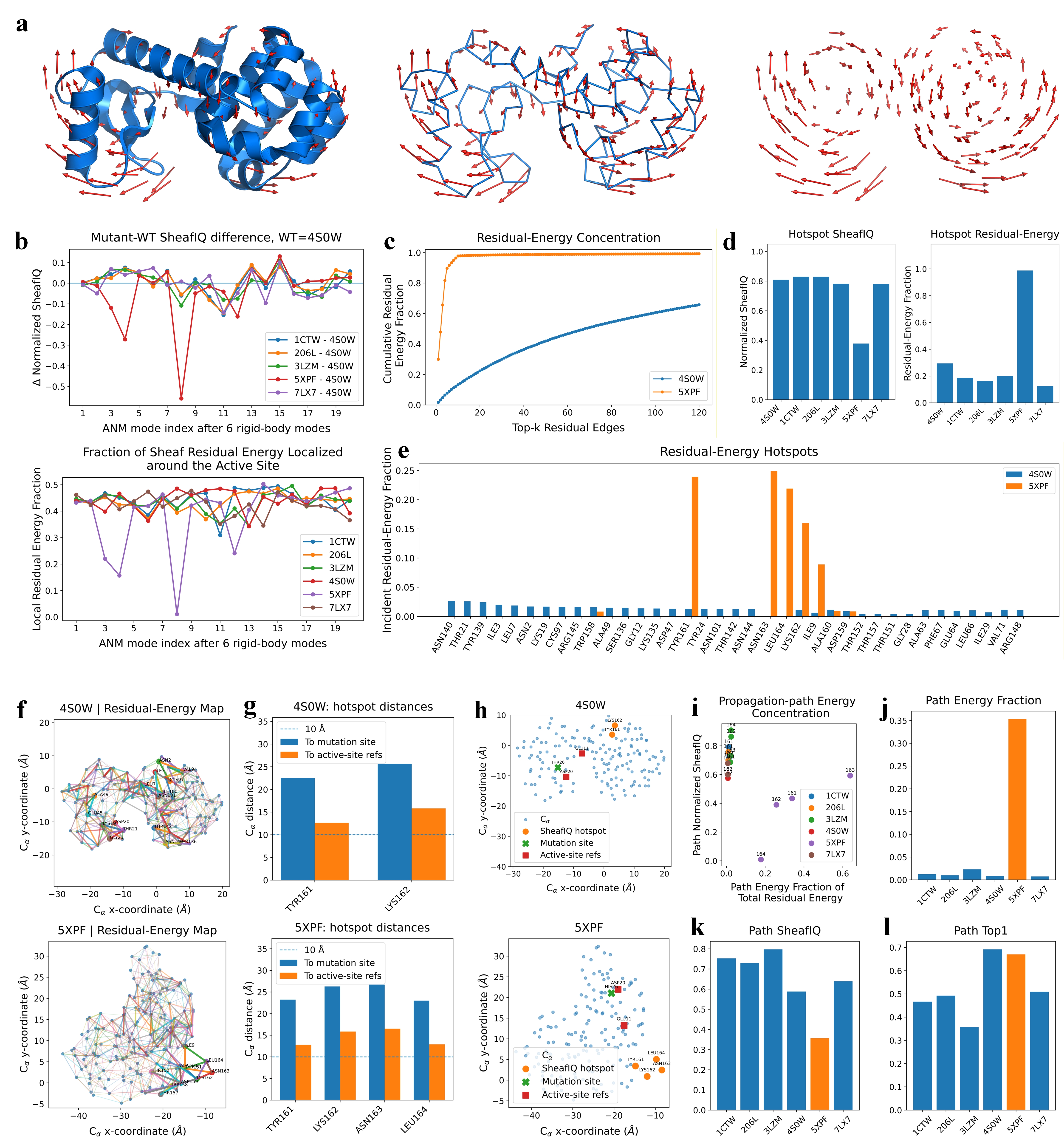}
	\caption{\textbf{SheafIQ identifies mutation-induced residual energy hotspots and long-range communication reorganization in T4 lysozyme.} (a) Construction of the vector field from ANM residue displacement vectors on the ${\rm C}_\alpha$ contact network.
(b) Comparison of normalized SheafIQ spectra and local residual energy fractions across T4 lysozyme mutants. The local residual energy fraction quantifies the proportion of total sheaf residual energy localized within 10$\AA$ of catalytic residues (E11 and D20).
(c) Cumulative residual energy distributions over the highest-energy contact edges for WT (4S0W) and mutant 5XPF at mode~8.
(d) Hotspot SheafIQ and hotspot residual energy fractions.
(e) residual energy hotspot residues.
(f) residual energy maps projected onto the residue contact network.
(g) ${\rm C}_\alpha$ distances between hotspot residues, the mutation site, and known active-site residues.
(h) Spatial localization of hotspot residues relative to the mutation site and catalytic region.
(i) Relationship between path SheafIQ and path-energy concentration.
(j--l) Comparison of path-energy fraction, path SheafIQ, and Top1 path-energy fraction across representative structures.}
	
	\label{mainfig:t4_lysozyme_1} 
\end{figure}

Protein function depends not only on its static structure but also on the spatial organization of collective residue motions. Mutations can alter these motions and reorganize long-range communication without necessarily inducing substantial global structural changes \cite{smith2016allosteric, berendsen2000collective, liu2017dynamical}. To evaluate its ability to capture protein dynamics, SheafIQ was applied to characterize mutation-induced changes and identify the structural regions underlying these changes. For each protein structure, we constructed a ${\rm C}_\alpha$ contact network and calculated its anisotropic network model (ANM) normal modes \cite{kelly1995anisotropic, eyal2006anisotropic, atilgan2001anisotropy}. The displacement vector of each residue in a given mode was treated as a vector field on a geometric graph, allowing SheafIQ to quantify how residual incompatibility energy is distributed across residue contacts. We first used T4 lysozyme \cite{matthews1995studies, bouvignies2011solution} as a detailed case study. We then examined TEM-1 $\beta$-lactamase \cite{sideraki2001secondary, jacquier2013capturing} and dihydrofolate reductase (DHFR) \cite{radkiewicz2000protein, schnell2004structure} to evaluate whether the observed patterns generalize across proteins with diverse structural architectures, functional regions, and mutational or conformational properties.

Figure~\ref{mainfig:t4_lysozyme_1}(a) illustrates the construction of a vector field from an ANM normal mode of T4 lysozyme. We computed SheafIQ for the first 20 non-rigid modes and compared each mutant with the wild-type structure 4S0W. As shown in Figure~\ref{mainfig:t4_lysozyme_1}(b), most mutants exhibited only moderate mode-dependent differences, whereas 5XPF displayed a pronounced decrease in SheafIQ, particularly in mode~8. The local residual energy fraction around the catalytic region exhibited a corresponding increase, identifying mode~8 as the primary focus of the subsequent analyses.

To investigate the structural basis of this pronounced change, we examined the spatial organization of residual energy. As shown in Figure~\ref{mainfig:t4_lysozyme_1}(c), the residual energy in 5XPF became concentrated on a small number of contact edges, whereas that of the wild-type structure remained broadly distributed. We defined residual-energy hotspots as the top-5 residues ranked by aggregated edge residual energy, together with their neighboring residues within 10$\AA$. Consistent with this observation, the hotspot region in 5XPF exhibited substantially lower SheafIQ but a much higher residual energy fraction (Figure~\ref{mainfig:t4_lysozyme_1}(d)). The corresponding hotspot residues and residual energy maps further localized this concentration to a specific structural region (Figures~\ref{mainfig:t4_lysozyme_1}(e,f)). Notably, these hotspot residues were located closer to the functional active-site region than to the mutation site (Figures \ref{mainfig:t4_lysozyme_1}(g,h)), suggesting that mutation-induced dynamical reorganization extends beyond the local mutation environment toward functionally relevant regions. 

To investigate how the mutation-associated residual reorganization extends from the mutation site to the remote hotspot, we further analyzed residual-energy-associated propagation pathways between the mutation residue (T26) and hotspot residues (161–164). For each hotspot residue, we identified the top five residual-energy-weighted shortest paths, where edges with larger residual energy were preferentially selected as low-cost connections. These pathways therefore characterize structural routes along which sheaf residual energy is organized between the mutation site and remote hotspot regions. Compared with the wild-type structure, 5XPF exhibited substantially higher pathway energy fractions, indicating that a larger proportion of total residual energy was concentrated within a limited number of mutation–hotspot communication routes (Figures~\ref{mainfig:t4_lysozyme_1}(i--j)). Meanwhile, the decreased path SheafIQ and increased Top-1 path energy fraction revealed that residual energy within these pathways became further localized to a few dominant edges, suggesting the emergence of a concentrated residual-energy communication backbone after mutation  (Figures~\ref{mainfig:t4_lysozyme_1}(k--l)). Further analyses showed that the mutation-associated residual reorganization was supported by concentrated hotspot residues, bottleneck edges, and compact communication backbones. Additional T4 lysozyme mutants exhibited distinct SheafIQ changes and hotspot organizations, indicating mutation-specific reorganization patterns. Detailed analyses of communication backbones and additional protein systems are provided in Supplementary Section~\ref{Additional T4 Lysozyme Mutants}.

To further evaluate the generality of these findings, we extended the analysis to TEM-1 $\beta$-lactamase and DHFR. Across these systems, SheafIQ consistently revealed mutation- or variant-specific dynamical hotspots and long-range communication reorganization associated with known functional regions. Detailed analyses are provided in Supplementary Section~\ref{Generalization to TEM-1}--~\ref{Statistical Validation of SheafIQ Hotspots in DHFR}.

Together, these results demonstrate that SheafIQ captures mutation-induced reorganization of dynamical communication rather than merely changes in residual magnitude. By identifying residual-energy hotspots and their associated communication backbones, SheafIQ provides an interpretable view of how local mutations reshape long-range residue interactions.

\subsection{Characterizing Protein Conformational Organization with SheafIQ}

Protein conformational changes involve coordinated motions of many residues rather than independent local displacements \cite{lumry1954conformation, koshland1998conformational}. We therefore investigated whether SheafIQ can characterize the organization of these coordinated structural transitions and identify functionally important regions. To this end, we analyzed representative open-to-closed conformational transitions by representing residue displacements as vector fields and characterizing their residual organization using SheafIQ.

\begin{figure}[htbp]
	\captionsetup{font=footnotesize, justification=justified,  singlelinecheck=false, skip=3pt}
	\includegraphics[width=1.0\textwidth]{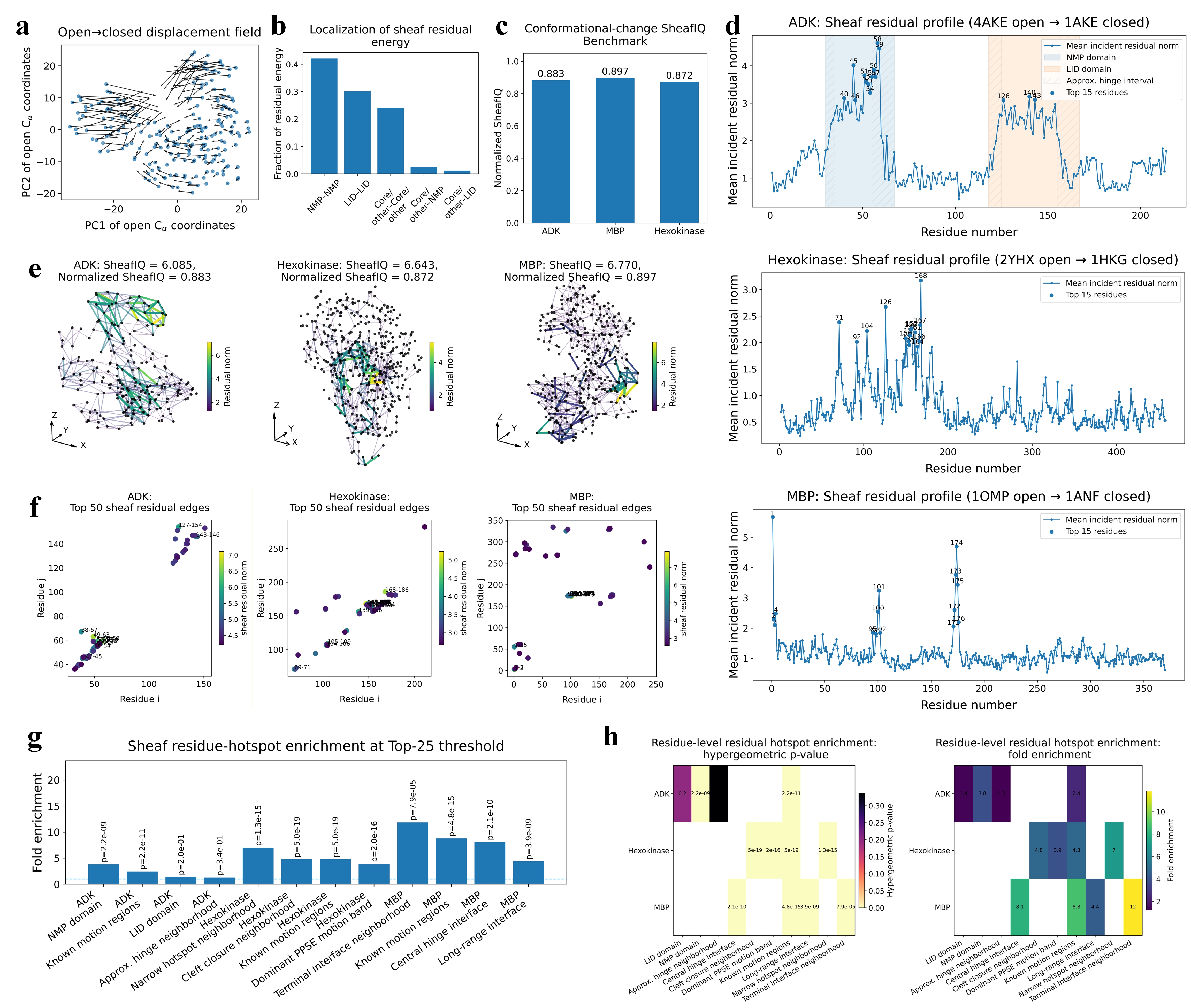}
	\caption{
\textbf{SheafIQ analysis of protein conformational changes.}
(a) Open-to-closed displacement field of adenylate kinase (ADK), where each arrow represents an aligned residue displacement vector.
(b) Distribution of sheaf residual energy across the structural regions of ADK.
(c) Normalized Information Quantity measured by SheafIQ for three representative conformational transitions: ADK, hexokinase, and maltose-binding protein (MBP).
(d) Residue-level sheaf residual profiles for the three proteins. Highlighted regions indicate annotated functional domains or approximate hinge regions, and labeled points denote the top residual residues.
(e) Three-dimensional visualization of high-residual interactions mapped onto the protein structures. Edge colors represent sheaf residual norms, illustrating the spatial organization of conformational changes.
(f) Sequence-coordinate maps of the top 50 sheaf residual edges, highlighting residue pairs with the strongest local incompatibilities.
(g) Fold enrichment of the Top-25 sheaf residual hotspots in experimentally characterized conformationally important regions. Corresponding hypergeometric $p$-values are shown above each bar.
(h) Statistical validation of residue-level hotspot enrichment across the three proteins. The left panel shows hypergeometric $p$-values, and the right panel shows the corresponding fold enrichments for the annotated functional regions.}	
\label{mainfig:molecular_conformation} 
\end{figure}

We first applied SheafIQ to the open-to-closed transition of adenylate kinase (ADK) \cite{arora2007large, olsson2010overlap}. After aligning the two conformations, residue displacements were represented as a vector field on the residue interaction network. As shown in Fig.~\ref{mainfig:molecular_conformation}(a), the transition exhibited coordinated yet spatially heterogeneous residue motions. The resulting sheaf residual energy was predominantly concentrated within the NMP and LID domains (Fig.~\ref{mainfig:molecular_conformation}(b)), indicating that the conformational transition is organized around these principal mobile domains. This heterogeneous residual organization provides the foundation for the subsequent information-theoretic characterization by SheafIQ.

We next evaluated SheafIQ on three representative conformational transitions, including ADK, maltose-binding protein (MBP) \cite{tang2007open, shen2026two}, and hexokinase \cite{bennett1978glucose}. All three proteins exhibited consistently high normalized SheafIQ values (Fig.~\ref{mainfig:molecular_conformation}(c)), indicating that coordinated conformational transitions possess rich residual organization despite substantial differences in protein size and architecture. Residue-level profiles further revealed localized residual peaks (Fig.~\ref{mainfig:molecular_conformation}(d)), while three-dimensional residual maps and the corresponding top residual edges demonstrated that these high-residual interactions formed spatially organized clusters rather than being randomly distributed (Fig.~\ref{mainfig:molecular_conformation}(e,f)).

To evaluate the biological relevance of these residual organizations, we examined whether the identified hotspots corresponded to experimentally reported conformationally important regions. Significant enrichments were consistently observed across all three proteins (Fig.~\ref{mainfig:molecular_conformation}(g,h)). In ADK, hotspots preferentially localized to the NMP domain and known motion regions, whereas in hexokinase and MBP, they showed strong enrichments in annotated cleft-closing, hinge, and domain-interface regions. These results demonstrate that SheafIQ accurately identifies structurally and functionally important regions involved in conformational transitions.

To further verify that these enrichments arise from coordinated conformational organization rather than trivial geometric properties, we compared the observed conformational transitions with two negative controls generated by either randomly permuting residue displacement vectors or randomizing their directions while preserving displacement magnitudes (Fig.~\ref{supfig:molecular_conformation_2}).

Collectively, these results demonstrate that SheafIQ provides a unified framework for characterizing protein conformational changes from the perspective of residual organization. Beyond quantifying the global organization of coordinated residue motions, SheafIQ localizes biologically meaningful conformational hotspots and identifies interaction patterns underlying functional structural transitions. These findings establish SheafIQ as an effective framework for analyzing conformational organization across diverse protein systems.

\subsection{Characterizing Brain Functional Organization with SheafIQ}

\begin{figure}[htbp]
	\captionsetup{font=footnotesize, justification=justified,  singlelinecheck=false, skip=3pt}
	\includegraphics[width=1.0\textwidth]{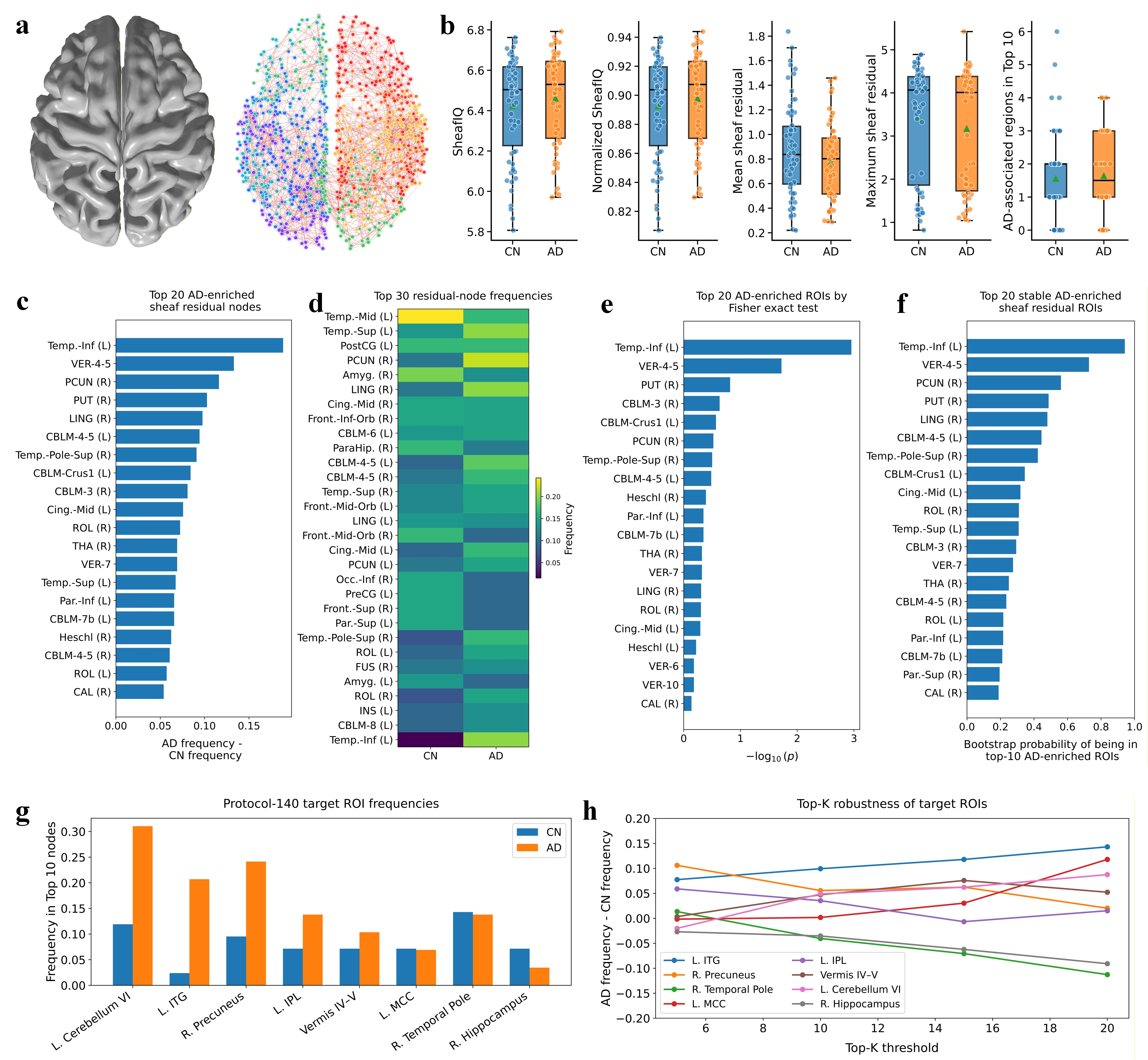}
	\caption{
\textbf{SheafIQ analysis of functional brain networks in Alzheimer's disease.}
(a) Illustration of the brain functional network constructed from the AAL atlas. Brain regions are represented as graph nodes, and functional connections are represented as graph edges.
(b) Comparison of global SheafIQ-related measures between cognitively normal (CN) subjects and subjects with Alzheimer's disease (AD), including SheafIQ, normalized SheafIQ, mean sheaf residual, maximum sheaf residual, and the number of predefined AD-associated regions among the Top-10 sheaf residual nodes.
(c) Top 20 brain regions exhibiting the largest increases in hotspot frequency in subjects with AD relative to CN subjects.
(d) Occurrence frequencies of the Top-30 sheaf residual nodes across CN and AD subjects.
(e) Top 20 AD-associated brain regions ranked by Fisher's exact test. Bar lengths represent $-\log_{10}(p)$.
(f) Top 20 stable AD-associated brain regions identified by subject-level bootstrap analysis. Bar lengths indicate the bootstrap probability of appearing among the Top-10 AD-associated regions.
(g) Frequencies of representative AD-associated brain regions after restricting the analysis to subjects whose imaging data were acquired under the standardized protocol (140 time points and 116 ROIs).
(h) Robustness of representative AD-associated brain regions across different Top-$K$ thresholds (Top-5, Top-10, Top-15, and Top-20). The vertical axis represents the difference in hotspot frequency between subjects with AD and CN subjects.
}	
	\label{mainfig:brain_network} 
\end{figure}

Brain functional disorders are often associated with subtle reorganizations of large-scale functional interactions rather than isolated abnormalities in individual brain regions \cite{stam2009graph, stam2007small}. We therefore investigated whether SheafIQ can characterize disease-related changes in brain functional organization and identify localized residual hotspots associated with Alzheimer's disease (AD). Resting-state fMRI data from cognitively normal (CN) subjects and patients with AD were obtained from the Alzheimer's Disease Neuroimaging Initiative (ADNI) database \cite{petersen2010alzheimer}. Functional connectivity networks were then constructed using the Automated Anatomical Labeling (AAL) atlas \cite{tzourio2002automated}, in which brain regions were represented as graph nodes and functional connections as graph edges (Fig.~\ref{mainfig:brain_network}(a)).

SheafIQ was first applied to characterize the global residual organization of brain functional networks. As shown in Fig.~\ref{mainfig:brain_network}(b), the overall SheafIQ, normalized SheafIQ, and residual statistics were comparable between CN subjects and patients with AD. However, subjects with AD tended to exhibit a greater number of literature-supported AD-associated regions among the Top-10 sheaf residual nodes. These predefined regions included medial temporal memory-related regions, as well as posterior default-mode and temporal regions that have been frequently implicated in Alzheimer's disease \cite{greicius2004default, rao2022hippocampus, galton2001differing}. These findings suggest that disease-related alterations are reflected more strongly in the spatial organization of residual hotspots than in global network-level measures.

We next examined the spatial organization of sheaf residual hotspots. Several brain regions, including the left inferior temporal gyrus, bilateral cerebellar regions, the right precuneus, the right putamen, and the bilateral lingual gyri, were more frequently identified among the Top-10 sheaf residual nodes in subjects with AD than in CN subjects (Fig.~\ref{mainfig:brain_network}(c)). Moreover, these regions were consistently identified across independent subjects with AD while remaining substantially less frequent in CN subjects (Fig.~\ref{mainfig:brain_network}(d)), indicating a reproducible disease-associated hotspot organization. The biological relevance of these hotspot regions was further evaluated through statistical enrichment analysis. Fisher's exact test \cite{upton1992fisher} identified several significantly enriched AD-associated regions, with the left inferior temporal gyrus exhibiting the strongest statistical significance (Fig.~\ref{mainfig:brain_network}(e)). Subject-level bootstrap analysis \cite{wu1986jackknife, hesterberg2011bootstrap} further demonstrated that these major hotspot regions were consistently identified across repeated resampling, confirming their robustness (Fig.~\ref{mainfig:brain_network}(f)).

Finally, we evaluated the robustness of the identified hotspot organization under different analysis settings. The major AD-associated hotspot regions remained consistently identified after restricting the analysis to subjects acquired under a standardized imaging protocol (Fig.~\ref{mainfig:brain_network}(g)). Furthermore, similar enrichment patterns were observed across Top-5, Top-10, Top-15, and Top-20 hotspot selections (Fig.~\ref{mainfig:brain_network}(h)), indicating that the identified hotspot organization is robust to variations in both acquisition conditions and analysis parameters.

Taken together, these results demonstrate that SheafIQ reveals reproducible disease-associated patterns of functional reorganization in brain networks. Rather than relying solely on global network-level measures, SheafIQ identifies localized residual hotspots that are statistically significant, reproducible across subjects, and robust to variations in acquisition protocols and analysis parameters.

\subsection{Characterizing Urban Traffic Organization with SheafIQ}

\begin{figure}[htbp]
	\captionsetup{font=footnotesize, justification=justified,  singlelinecheck=false, skip=3pt}
	\includegraphics[width=1.0\textwidth]{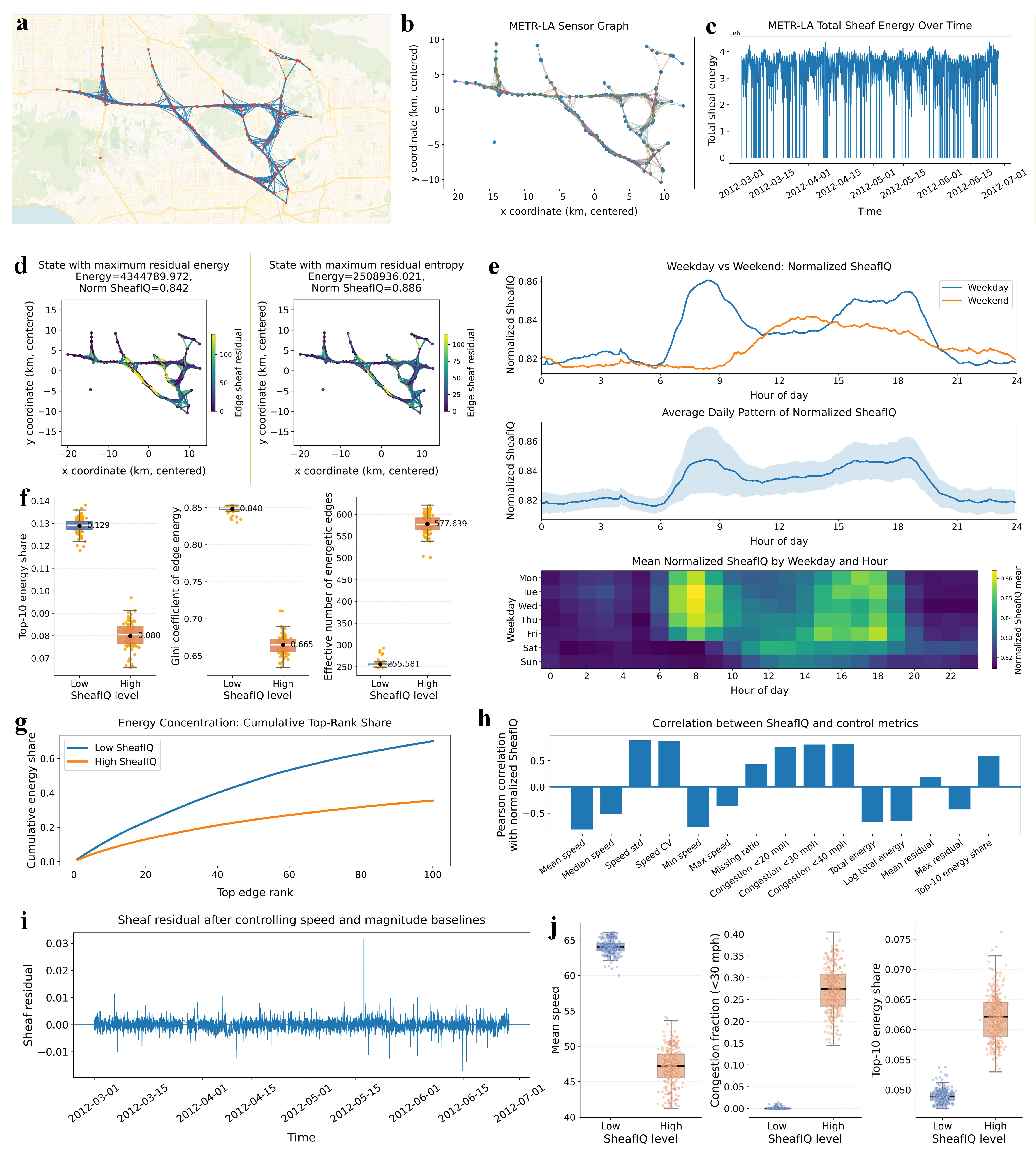}
	\caption{\textbf{Characterization of urban traffic organization using SheafIQ on the METR-LA dataset.}
(a) Geographic locations of traffic sensors and road connections.
(b) Graph representation of the METR-LA traffic network.
(c) Temporal evolution of total sheaf residual energy.
(d) Representative traffic states with the maximum total sheaf residual energy (left) and maximum normalized SheafIQ (right).
(e) Temporal patterns of normalized SheafIQ. The top panel compares weekdays and weekends, the middle panel shows the average daily profile, and the bottom panel presents a weekday--hour heatmap.
(f) Comparison of residual energy organization between low- and high-SheafIQ states using the Top-10 energy share, Gini coefficient of edge residuals, and effective number of energetic edges.
(g) Cumulative residual energy distributions ranked by edge residual magnitude for low- and high-SheafIQ states.
(h) Pearson correlations between normalized SheafIQ and representative traffic metrics and residual-based statistics.
(i) Residual component of normalized SheafIQ after regressing out the traffic speed and residual magnitude.
(j) Comparison of representative traffic metrics between low- and high-SheafIQ states.}
	\label{mainfig:METR-LA_1} 
\end{figure}

Urban traffic systems exhibit highly dynamic spatiotemporal interactions, where congestion emerges from coordinated changes across many road segments rather than isolated local events. We therefore investigated whether SheafIQ can characterize the organization of traffic flow and identify representative network states from large-scale sensor observations. The METR-LA dataset \cite{LiYS018} was represented as a graph, in which traffic sensors correspond to graph nodes and road connections define graph edges (Fig.~\ref{mainfig:METR-LA_1}(a,b)). At each time point, traffic speeds were projected onto locally estimated road directions to construct a vector field on the underlying network. SheafIQ was then computed to characterize the organization of traffic residual interactions.

We first examined the temporal evolution of traffic organization. As shown in Fig.~\ref{mainfig:METR-LA_1}(c) and Fig.~\ref{supfig:METR-LA_2}(a), both total sheaf residual energy and normalized SheafIQ varied continuously over time, indicating persistent changes in network-wide traffic organization. Averaging across multiple days further revealed highly reproducible daily patterns (Fig.~\ref{mainfig:METR-LA_1}(e); Fig.~\ref{supfig:METR-LA_2}(c)). SheafIQ increased rapidly during weekday morning rush hours, remained elevated throughout the day, and exhibited a second peak during the evening commute, whereas weekend profiles were substantially smoother. These findings demonstrate that SheafIQ naturally captures the regular temporal organization of urban traffic dynamics.

We next investigated how traffic organization differs across representative network states. Visualizing the states with the maximum sheaf residual energy and the maximum normalized SheafIQ (Fig.~\ref{mainfig:METR-LA_1}(d); Fig.~\ref{supfig:METR-LA_2}(b)) showed that the maximum-energy state contained widespread high-residual road segments, whereas the maximum-SheafIQ state exhibited a more spatially distributed residual pattern despite having a lower overall residual magnitude. This distinction became even more evident when comparing low- and high-SheafIQ states. High-SheafIQ states displayed substantially lower Top-10 energy shares and Gini coefficients \cite{dorfman1979formula}, together with a much larger effective number of energetic edges (Fig.~\ref{mainfig:METR-LA_1}(f)). Correspondingly, residual energy accumulated much more gradually across ranked edges (Fig.~\ref{mainfig:METR-LA_1}(g)), and the spatial residual maps showed that the additional residual energy was distributed across many road segments rather than concentrated within a few localized regions (Fig.~\ref{supfig:METR-LA_2}(d--f)). These findings indicate that SheafIQ primarily reflects the spatial organization of traffic interactions rather than simply the magnitude of residual energy.

Finally, we examined whether SheafIQ provides information beyond conventional traffic descriptors. As shown in Fig.~\ref{mainfig:METR-LA_1}(h), normalized SheafIQ was strongly associated with traffic speed, congestion, and several residual-based statistics. The scatter plots further confirmed these relationships while revealing considerable variability among samples with similar values of these conventional metrics (Fig.~\ref{mainfig:METR-LA_1}(j); Fig.~\ref{supfig:METR-LA_2}(g)), suggesting that these variables cannot fully explain the observed SheafIQ patterns. To further account for the effects of traffic speed and residual magnitude, we regressed normalized SheafIQ against these baseline variables and analyzed the remaining residual component. As shown in Fig.~\ref{mainfig:METR-LA_1}(i), a clear temporal structure remained after controlling for these factors. Moreover, the daily profile of SheafIQ differed substantially from those of traffic speed, congestion, total sheaf residual energy, and residual concentration (Fig.~\ref{supfig:METR-LA_2}(h)), demonstrating that SheafIQ captures complementary organizational information beyond existing traffic metrics.

We further evaluated the generality of these findings using the PEMS-BAY traffic dataset \cite{LiYS018}. Applying the same graph construction and SheafIQ analysis pipeline, we observed consistent temporal patterns, spatial residual organization, and relationships with conventional traffic metrics (Fig.~\ref{supfig:PEMS-BAY_1}; Fig.~\ref{supfig:PEMS-BAY_2}). These additional results demonstrate that the organizational patterns captured by SheafIQ are not specific to a single traffic network but generalize across different urban transportation systems.

Overall, experiments on both the METR-LA and PEMS-BAY datasets demonstrate that SheafIQ provides a robust characterization of large-scale traffic organization. Across different traffic networks, SheafIQ consistently captures reproducible temporal dynamics, reveals how residual interactions are spatially organized across road segments, and identifies organizational patterns that cannot be fully explained by conventional traffic metrics such as traffic speed, congestion level, or residual magnitude. These findings establish SheafIQ as a complementary measure for characterizing the structural organization of dynamic traffic systems.

\subsection{Characterizing Power Grid Organization with SheafIQ}
\label{Subsection: Characterizing Power Grid Organization with SheafIQ}

\begin{figure}[htbp]
	\captionsetup{font=footnotesize, justification=justified,  singlelinecheck=false, skip=3pt}
	\includegraphics[width=1.0\textwidth]{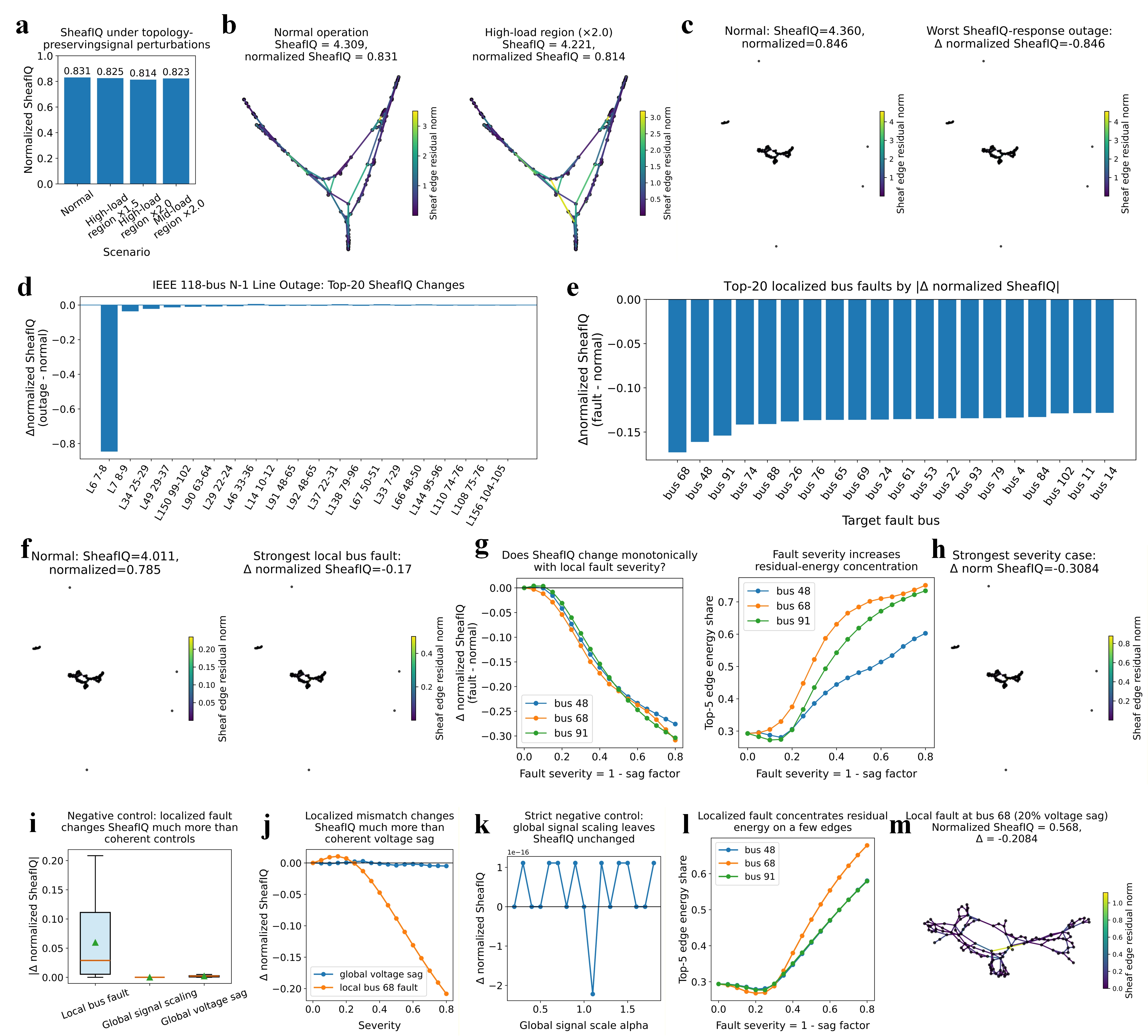}
	\caption{
\textbf{Characterizing electrical-state organization in the IEEE 118-bus power grid using SheafIQ.}
(a) Normalized SheafIQ under topology-preserving regional load perturbations.
(b) Representative edge residual distributions under normal operation and a localized load perturbation.
(c) Representative edge residual distributions under normal operation and the most disruptive transmission-line outage.
(d) Top transmission-line outages ranked by the reduction in normalized SheafIQ.
(e) Top localized bus faults ranked by the reduction in normalized SheafIQ.
(f) Representative edge residual distributions under normal operation and the strongest localized bus fault.
(g) Changes in normalized SheafIQ and the Top-5 edge energy share with increasing fault severity for representative buses.
(h) Representative edge residual distribution under the strongest fault severity.
(i) Comparison of changes in normalized SheafIQ induced by localized bus faults, coherent global signal scaling, and coherent global voltage sag.
(j) Comparison between localized bus faults and coherent global voltage sag under increasing fault severity.
(k) Strict negative control showing normalized SheafIQ under coherent global signal scaling.
(l) Changes in the Top-5 edge energy share with increasing fault severity for representative buses.
(m) Representative edge residual distribution under a localized bus fault.
}
	\label{mainfig:IEEE118} 
\end{figure}

Power grids require maintaining globally coordinated electrical states under continually changing operating conditions and localized disturbances. We therefore investigated whether SheafIQ can characterize the organization of electrical interactions and identify critical operating states in transmission networks. The IEEE 118-bus system \cite{thurner2018pandapower} was represented as a graph, in which buses and transmission lines correspond to graph nodes and edges, respectively. Bus voltage magnitudes and phase angles obtained from AC power-flow analysis \cite{thurner2018pandapower} were treated as node signals (or node vectors), while the network topology was kept fixed unless transmission-line outages were explicitly considered.

We first examined whether SheafIQ responds to changes in electrical states without altering the network topology. Localized load perturbations consistently reduced normalized SheafIQ, and stronger perturbations produced progressively lower values regardless of the perturbed region (Fig.~\ref{mainfig:IEEE118}(a)). The corresponding residual maps (Fig.~\ref{mainfig:IEEE118}(b); Fig.~\ref{supfig:IEEE118_2}(a)) showed that these perturbations reorganized the spatial distribution of voltage residuals while leaving the network topology unchanged. These findings indicate that SheafIQ captures changes in the organization of electrical states rather than graph topology itself.

We next evaluated the response of SheafIQ to transmission-line outages using N-1 contingency analysis \cite{khanabadi2012optimal}. Most line outages caused only minor changes, whereas a small number of critical transmission lines produced substantially larger reductions in normalized SheafIQ (Fig.~\ref{mainfig:IEEE118}(c,d)). Importantly, these responses showed little dependence on pre-outage line loading (Fig.~\ref{supfig:IEEE118_2}(b)), suggesting that SheafIQ reflects the reorganization of network-wide electrical interactions rather than the operating conditions of individual transmission lines.

We further investigated localized bus faults by independently perturbing each bus while preserving the network topology. Different buses produced markedly different reductions in normalized SheafIQ (Fig.~\ref{mainfig:IEEE118}(e)), indicating that only a subset of buses plays a dominant role in maintaining global electrical organization. The strongest-response case exhibited pronounced residual localization around the perturbed bus (Fig.~\ref{mainfig:IEEE118}(f)). Although larger responses were generally associated with higher-degree buses, substantial variability remained among buses with similar degrees (Fig.~\ref{supfig:IEEE118_2}(c)), suggesting that SheafIQ provides complementary information beyond topology-based importance measures.

We next examined whether SheafIQ continuously tracks fault severity. As fault severity increased, normalized SheafIQ decreased monotonically for all representative buses, whereas residual energy became increasingly concentrated within a small subset of transmission lines (Fig.~\ref{mainfig:IEEE118}(g)). The most severe fault generated highly localized residual activity together with the largest reduction in normalized SheafIQ (Fig.~\ref{mainfig:IEEE118}(h)), and similar monotonic trends were observed across additional buses (Fig.~\ref{supfig:IEEE118_2}(d)).

Finally, we investigated whether these responses arise specifically from localized electrical inconsistencies rather than coherent global signal changes. Compared with localized bus faults, coherent global voltage sag and global signal scaling produced only minor or negligible changes in normalized SheafIQ (Fig.~\ref{mainfig:IEEE118}(i--k)). In contrast, localized faults progressively concentrated residual energy within a small subset of transmission lines and generated pronounced residual hotspots (Fig.~\ref{mainfig:IEEE118}(l,m); Fig.~\ref{supfig:IEEE118_2}(e)). These findings demonstrate that SheafIQ primarily reflects the spatial organization of localized electrical disturbances rather than uniform global variations.

To evaluate the generality of these findings, we further performed the same analyses on the larger IEEE 300-bus system. As shown in Supplementary Figs.~\ref{supfig:IEEE300}, the responses of SheafIQ to localized load perturbations, transmission-line outages, bus faults, fault severity, and the control experiments were highly consistent with those observed for the IEEE 118-bus system. Overall, these findings demonstrate that SheafIQ provides a robust characterization of the organization of electrical interactions, enabling the identification of critical components and the characterization of localized disturbances across power transmission networks.

\section{Discussion and Conclusion}

Graph-based vector fields arise naturally in a wide variety of biological and engineered systems, yet existing analytical frameworks primarily characterize either network topology or graph signals independently. In this work, we introduced SheafIQ, a sheaf-theoretic information quantification framework that explicitly bridges local vector compatibility and global information-theoretic organization. Rather than measuring only the magnitude of local incompatibilities, SheafIQ characterizes how these incompatibilities are spatially organized across a graph through the entropy of the normalized residual energy distribution. This perspective extends information-theoretic network analysis beyond structural topology to the organization of vector fields on graphs.

The experiments across proteins, functional brain networks, urban traffic systems, and power grids consistently demonstrate that this formulation captures complementary organizational information that cannot be explained solely by graph topology, vector magnitudes, or conventional application-specific descriptors. Despite the substantial differences among these systems, they all share a common mathematical structure consisting of a graph, node-associated vectors, and local geometric relationships. By introducing a common edge-associated coordinate system through cellular sheaves, SheafIQ provides a unified mechanism for evaluating local vector compatibility, while the entropy of the resulting residual organization summarizes how local incompatibilities are distributed over the entire network. The consistent observations across these diverse applications suggest that residual organization represents a general characteristic of vector fields on graphs rather than a domain-specific property.

Several limitations should also be acknowledged. First, the current formulation assumes that node vectors can be expressed in a common geometric coordinate system so that meaningful edge-associated reference frames can be constructed. Although this assumption is satisfied by many physical systems, including molecular dynamics, traffic flow, and electrical networks, more general graph-based data may require alternative sheaf constructions or domain-specific restriction maps. Second, SheafIQ characterizes the spatial organization of vector incompatibilities but does not explicitly model temporal evolution, uncertainty, or multiscale interactions. Extending the framework to dynamic sheaf models, probabilistic formulations, or hierarchical graph representations therefore represents promising future directions. It is also of interest to investigate alternative measures of edge incompatibility, beyond the one adopted in this work, together with more general cellular sheaf constructions, such as asymmetric sheaf models. Understanding their geometric interpretations and identifying their suitability for different classes of datasets constitute interesting directions for future research.

Beyond the applications considered here, the proposed framework is potentially applicable to a broad range of systems involving vector-valued interactions on graphs, including mechanical systems, fluid dynamics, robotic swarms, biological tissues, and scientific simulations. More generally, SheafIQ establishes a unified information-theoretic framework for quantifying vector fields on graphs by connecting local geometric compatibility with global organizational complexity, providing a new perspective for studying coordinated organization in complex systems.

\section{Funding}
This research was supported by the National Science and Technology Council, Taiwan [NSTC 115-2115-M-390-001-MY2], the Beijing Natural Science Foundation [1264060], the China Postdoctoral Science Foundation[2026M793419], and the Postdoctoral Fellowship Program of CPSF [GZC20252026].


\clearpage 

\bibliography{SheafIQ} 
\bibliographystyle{sciencemag}




\newpage


\renewcommand{\thefigure}{S\arabic{figure}}
\renewcommand{\thetable}{S\arabic{table}}
\renewcommand{\theequation}{S\arabic{equation}}
\renewcommand{\thepage}{S\arabic{page}}
\setcounter{figure}{0}
\setcounter{table}{0}
\setcounter{equation}{0}
\setcounter{page}{1} 


\begin{center}
\section*{Supplementary Materials for\\ \scititle}

Cong Shen,
Guancen Lin$^{\ast}$,
Chuan-Shen Hu$^{\ast}$\\ 
\small$^\ast$Corresponding author. Email: linguancen@amss.ac.cn, chuanshenhu1@nuk.edu.tw\\
\end{center}



\newpage



\section{Supplementary Results}


\subsection{Generalization and Statistical Validation on Additional Protein Systems}
\label{Generalization and Statistical Validation on Additional Protein Systems}

\subsubsection{Additional T4 Lysozyme Mutants}
\label{Additional T4 Lysozyme Mutants}
To further investigate whether the mutation-induced dynamical reorganization observed in 5XPF is specific to this mutant or represents a broader pattern, we analyzed additional T4 lysozyme mutants using the same SheafIQ-based framework.

For the representative mutant 5XPF, residual-energy-associated propagation pathways were first examined to characterize how the mutation-associated perturbation extends toward remote hotspot regions. The identified pathways were concentrated around a limited number of hotspot residues and bottleneck edges (Figure~\ref{supfig:t4_lysozyme_2}(a--d)). By aggregating the dominant residual-energy-weighted pathways, we obtained compact communication backbones connecting the mutation site and remote hotspot regions (Figure~\ref{supfig:t4_lysozyme_2}(e--g)). These results indicate that mutation-induced dynamical reorganization is mediated through a restricted set of structurally important communication routes rather than being uniformly distributed across the protein structure.

We next extended the analysis to additional T4 lysozyme mutants. As shown in Figure~\ref{supfig:t4_lysozyme_2}(h--j), different mutants exhibited distinct patterns of SheafIQ variation, residual-energy localization, and communication backbone organization. In particular, 1G06 and 1G0J displayed the most pronounced changes, characterized by stronger residual-energy concentration and altered long-range communication patterns. These observations demonstrate that SheafIQ captures mutation-specific reorganization of residue communication and identifies diverse dynamical responses across different T4 lysozyme variants.

\subsubsection{Generalization to TEM-1 $\beta$-lactamase}
\label{Generalization to TEM-1}
To assess the generality of SheafIQ beyond T4 lysozyme, we applied the same analysis to TEM-1 $\beta$-lactamase mutants. As shown in Figures~\ref{supfig:TEM-1}(a--c), different mutants exhibited distinct mode-dependent SheafIQ changes, communication backbone rewiring, and hotspot organizations, with 4RVA exhibiting the largest structural deviation.

We next examined the functional relevance of these hotspots. Figures~\ref{supfig:TEM-1}(d--f) show that hotspot residual energy was consistently enriched within or near the $\Omega$-loop and remained spatially close to the catalytic core, SDN loop, and KTG motif. Across multiple mutants, the largest proportion of hotspot residual energy was concentrated around the $\Omega$-loop.

Together, these findings suggest that SheafIQ consistently localizes mutation-induced dynamical reorganization to biologically important regions of TEM-1.

\subsubsection{Generalization to DHFR Structural Variants}
\label{Generalization to DHFR Structural Variants}
To further examine the generality of SheafIQ across different structural variants, we applied the same analysis to DHFR. Figures~\ref{supfig:DHFR_1}(a--c) show that different variants exhibited distinct SheafIQ changes, communication backbone rewiring, and hotspot organizations, with 3DFR exhibiting the largest structural deviation.

We next examined the biological relevance of these hotspots. As shown in Figures~\ref{supfig:DHFR_1}(d--f), hotspot residues were preferentially enriched within or around the Met20 loop and remained spatially close to the Met20 loop and active-site residues. Different variants exhibited distinct residual energy allocation patterns across functional regions, indicating variant-specific dynamical reorganization.

Together, these findings demonstrate that SheafIQ consistently captures variant-specific dynamical reorganization while preferentially highlighting residues associated with known functional regions.

\subsubsection{Statistical Validation of SheafIQ Hotspots in DHFR}
\label{Statistical Validation of SheafIQ Hotspots in DHFR}
To further examine whether these hotspot organizations are statistically meaningful rather than arising from random fluctuations, we performed randomization tests using randomly sampled residue sets. A representative null distribution is shown in Figure~\ref{supfig:DHFR_2}(a). Across all DHFR variants, hotspot enrichment and hotspot proximity to functional regions were statistically significant (Figures~\ref{supfig:DHFR_2}(b,c)), indicating that the identified hotspots are unlikely to have arisen by chance.

We next performed rank-enrichment analysis to determine whether functionally important residues preferentially receive high SheafIQ scores. Figures~\ref{supfig:DHFR_2}(d--g) show that mutation-related residues together with the Met20 and FG loops were consistently enriched among the top-ranked SheafIQ residues and were recovered substantially earlier than expected under random ranking.

Finally, consensus analysis across all DHFR structures revealed that SheafIQ hotspots were highly reproducible. Figures~\ref{supfig:DHFR_2}(h--j) show that mutation-related residues contributed the largest number of recurrent hotspots, while several residues (e.g., 55, 66, 67, 68, 87, and 128) repeatedly appeared as high-confidence hotspots across different structures. These findings demonstrate that SheafIQ identifies reproducible communication hotspots rather than isolated structure-specific signals.

Taken together, the analyses of TEM-1 and DHFR demonstrate that the communication hotspots identified by SheafIQ are reproducible across diverse protein systems, are significantly enriched in known functional regions, and remain robust under multiple statistical validation analyses.

\subsection{Generalization to the PEMS-BAY Traffic Dataset}
\label{Generalization to the PEMS-BAY Traffic Dataset}
To evaluate the generality of SheafIQ on urban traffic systems, we repeated the same analysis on the PEMS-BAY dataset. Traffic speeds were represented as vector fields on the sensor network, and SheafIQ was computed at each time point to characterize the organization of spatial traffic dynamics. Figures~\ref{supfig:PEMS-BAY_1}(a,b) illustrate the geographic distribution of traffic sensors and the corresponding graph representation.

Consistent with the METR-LA results, both normalized SheafIQ and total sheaf residual energy exhibited substantial temporal variations throughout the observation period (Fig.~\ref{supfig:PEMS-BAY_1}(d)). Representative network states with maximum residual energy, median residual energy, and maximum normalized SheafIQ further demonstrated that high-SheafIQ states were characterized by residual activity distributed across multiple road segments rather than concentrated within a few localized regions (Fig.~\ref{supfig:PEMS-BAY_1}(b,c,e)). Daily averages, weekday--weekend comparisons, and weekday--hour heatmaps all revealed highly reproducible morning and evening commuting peaks, confirming that SheafIQ captures stable temporal organization associated with recurring traffic demand (Fig.~\ref{supfig:PEMS-BAY_1}(f)).

We next examined how the spatial organization of residual interactions differed between low- and high-SheafIQ states. High-SheafIQ states exhibited lower Top-10 energy shares and smaller Gini coefficients together with substantially larger effective numbers of energetic edges (Fig.~\ref{supfig:PEMS-BAY_1}(g)). The corresponding average residual maps, cumulative energy distributions, edge residual histograms, and difference maps consistently showed that residual energy was distributed across a much broader portion of the network rather than concentrated on a limited number of road segments (Fig.~\ref{supfig:PEMS-BAY_1}(h--j)). These findings closely mirror those obtained for METR-LA, indicating that high-SheafIQ states consistently correspond to more spatially distributed traffic organization.

We further performed the same control analyses used for METR-LA. Normalized SheafIQ exhibited strong correlations with conventional traffic metrics, including traffic speed, congestion, and residual-based statistics (Fig.~\ref{supfig:PEMS-BAY_2}(a,e)). After regressing out the effects of speed- and magnitude-related baseline variables, clear temporal structures remained in the residual component (Fig.~\ref{supfig:PEMS-BAY_2}(b)). Moreover, the average daily profile of SheafIQ differed from those of conventional traffic metrics (Fig.~\ref{supfig:PEMS-BAY_2}(c)), while high- and low-SheafIQ states remained clearly distinguishable in terms of traffic characteristics (Fig.~\ref{supfig:PEMS-BAY_2}(d)). These findings indicate that SheafIQ captures organizational information that cannot be explained solely by traffic intensity or residual magnitude.

Overall, the PEMS-BAY experiments closely reproduced the major findings from METR-LA, including consistent temporal organization, spatially distributed residual interactions during high-SheafIQ states, and complementary information beyond conventional traffic metrics. These findings demonstrate that SheafIQ generalizes robustly across different large-scale urban traffic networks.

\subsection{Generalization to the IEEE300 Power Grid}

To further evaluate the generality of SheafIQ, we repeated the localized bus-fault analysis on the larger IEEE300 power grid. As shown in Fig.~\ref{supfig:IEEE300}(a), increasing fault severity consistently reduced normalized SheafIQ while simultaneously increasing the Top-5 residual energy share, indicating progressively stronger concentration of residual energy under more severe local disturbances. The corresponding residual maps (Fig.~\ref{supfig:IEEE300}(b)) further show that the most severe localized fault generated pronounced residual activity around the perturbed region, whereas the remaining network was only weakly affected. These findings closely mirror those obtained for the IEEE118 system.

We next performed the same positive and negative control experiments. As shown in Fig.~\ref{supfig:IEEE300}(c), localized bus faults produced substantially larger reductions in normalized SheafIQ than coherent global voltage sags of comparable severity. Consistently, localized faults progressively concentrated residual energy onto a small subset of transmission lines, whereas normalized SheafIQ remained nearly unchanged under uniform global signal scaling (Fig.~\ref{supfig:IEEE300}(d,e)). Quantitative comparisons further confirmed that localized perturbations induced much larger SheafIQ variations than either negative-control condition (Fig.~\ref{supfig:IEEE300}(f)). The corresponding residual maps (Fig.~\ref{supfig:IEEE300}(g)) likewise showed pronounced residual hotspots under localized faults, while coherent global perturbations largely preserved the overall residual organization.

Taken together, these findings demonstrate that SheafIQ's responses to localized electrical disturbances, fault severity, and control perturbations are highly consistent across the IEEE118 and IEEE300 systems, supporting its robustness and general applicability in characterizing the organization of electrical interactions in power transmission networks.

\begin{figure}[htbp]
	\captionsetup{font=footnotesize, justification=justified,  singlelinecheck=false, skip=3pt}
	\includegraphics[width=1.0\textwidth]{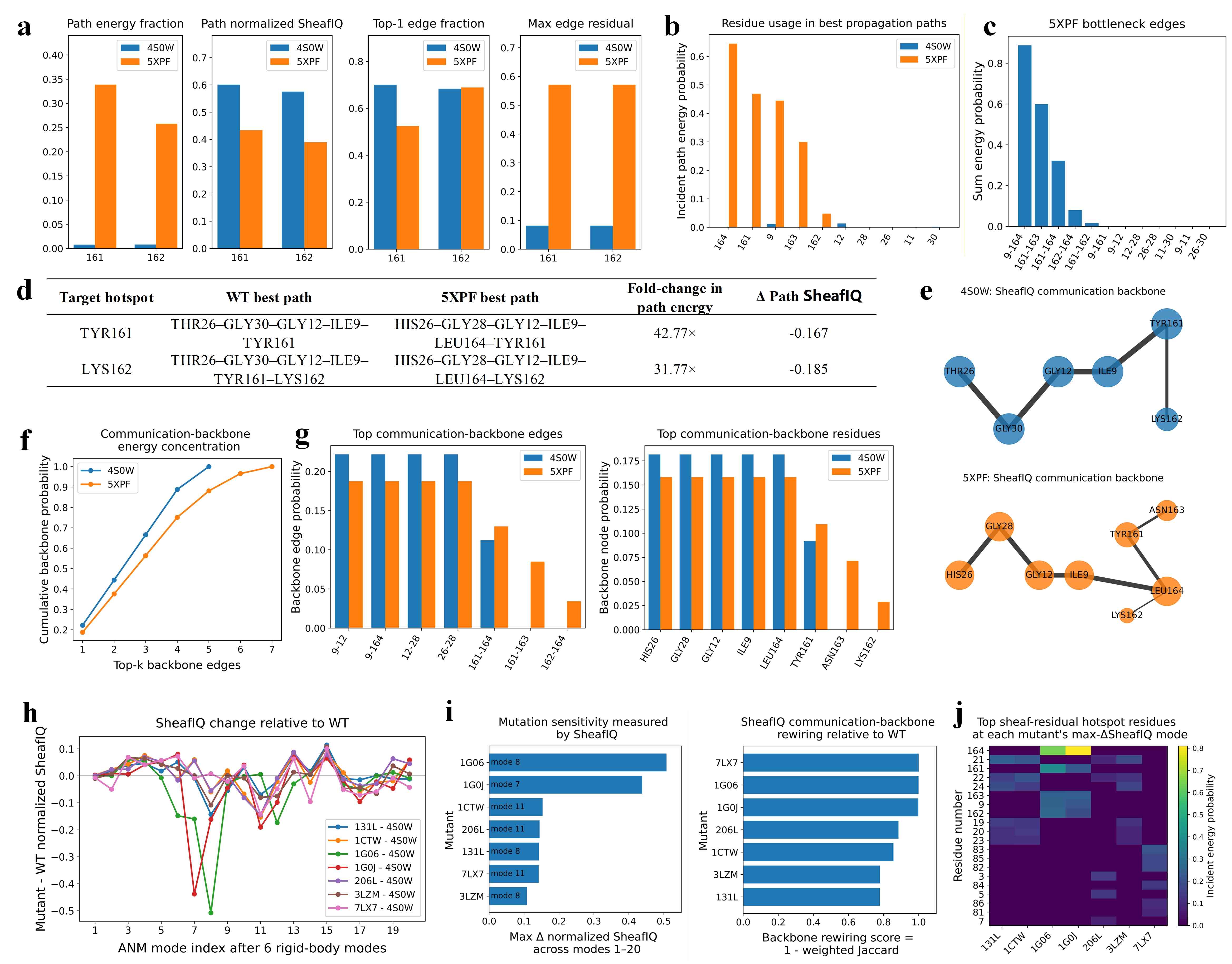}
	\caption{\textbf{Propagation-path analysis and communication backbone organization in T4 lysozyme.}
(a) Comparison of propagation-path energy concentration, path SheafIQ, Top-1 edge fraction, and maximum edge residual between WT and 5XPF.
(b) Residue usage frequencies in the highest-energy propagation pathways.
(c) Dominant bottleneck edges supporting residual energy propagation in 5XPF.
(d) Comparison of the best mutation-to-hotspot propagation pathways between WT and 5XPF.
(e) SheafIQ communication backbones reconstructed from the dominant propagation pathways.
(f) Cumulative residual energy captured by the communication backbone.
(g) Residual energy contributions of the dominant backbone edges and residues.
(h) Mode-dependent SheafIQ changes across additional T4 lysozyme mutants relative to the wild type.
(i) Mutation sensitivity measured by the maximum SheafIQ change and communication backbone rewiring.
(j) Mutation-specific residual energy hotspot residues identified at the maximum $\Delta$SheafIQ mode.
}
	\label{supfig:t4_lysozyme_2} 
\end{figure}

\begin{figure}[htbp]
	\centering
	\captionsetup{font=footnotesize, justification=justified,  singlelinecheck=false, skip=3pt}
	\includegraphics[width=1.0\textwidth]{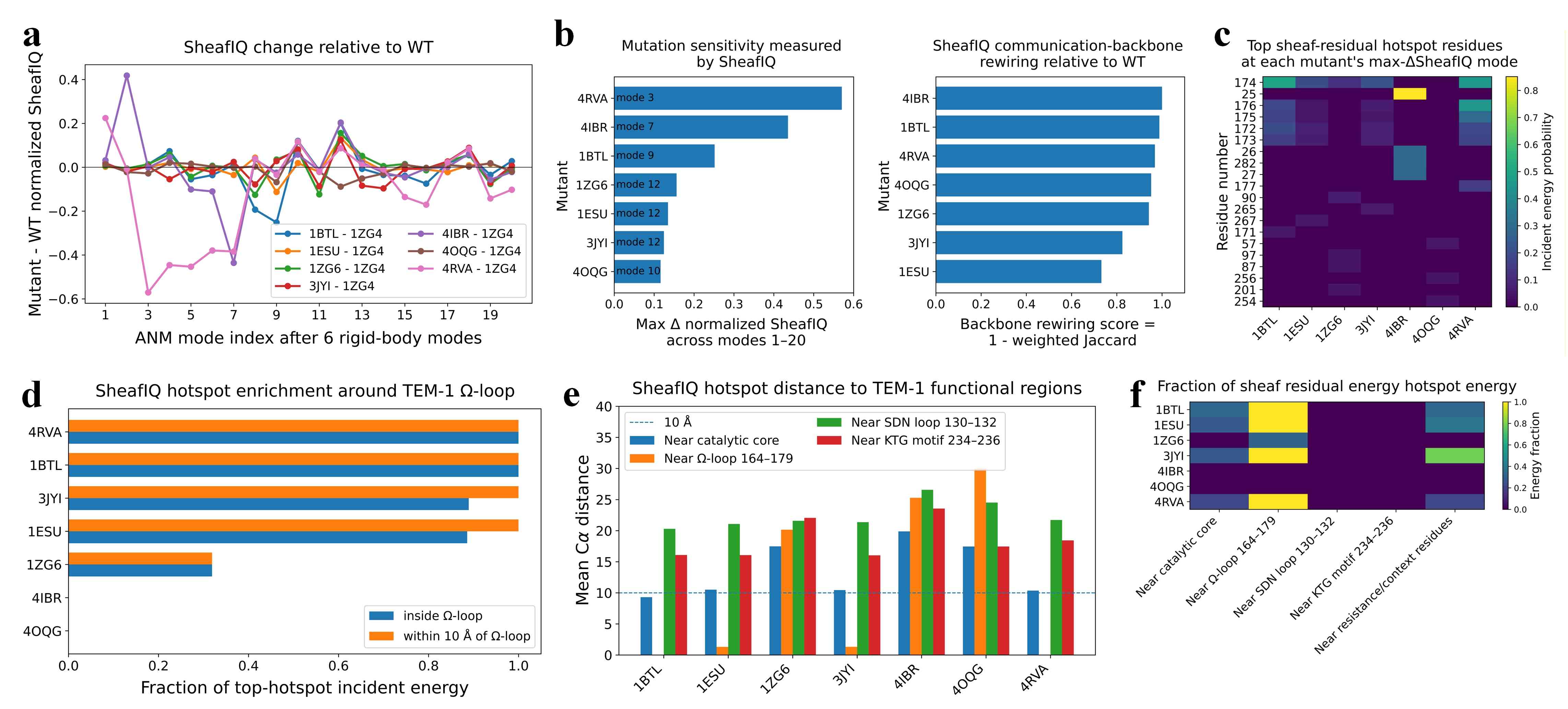}
	\caption{\textbf{SheafIQ identifies mutation-specific structural reorganization and functionally relevant hotspots in TEM-1 $\beta$-lactamase.}
(a) SheafIQ changes of different TEM-1 mutants relative to the wild type (1ZG4) across the first 20 non-rigid ANM modes.
(b) Mutation sensitivity measured by the maximum SheafIQ change (left) and the corresponding communication backbone rewiring score (right).
(c) Heatmap of hotspot residues identified at the mode exhibiting the maximum SheafIQ change for each mutant.
(d) Enrichment of hotspot residual energy within and around the TEM-1 $\Omega$-loop.
(e) Mean ${\rm C}_\alpha$ distances between hotspot residues and representative functional regions, including the catalytic core, $\Omega$-loop, SDN loop, and KTG motif.
(f) Distribution of hotspot residual energy among different functional regions across TEM-1 mutants.
    }

	\label{supfig:TEM-1} 
\end{figure}

\begin{figure}[htbp]
	\captionsetup{font=footnotesize, justification=justified,  singlelinecheck=false, skip=3pt}
	\includegraphics[width=1.0\textwidth]{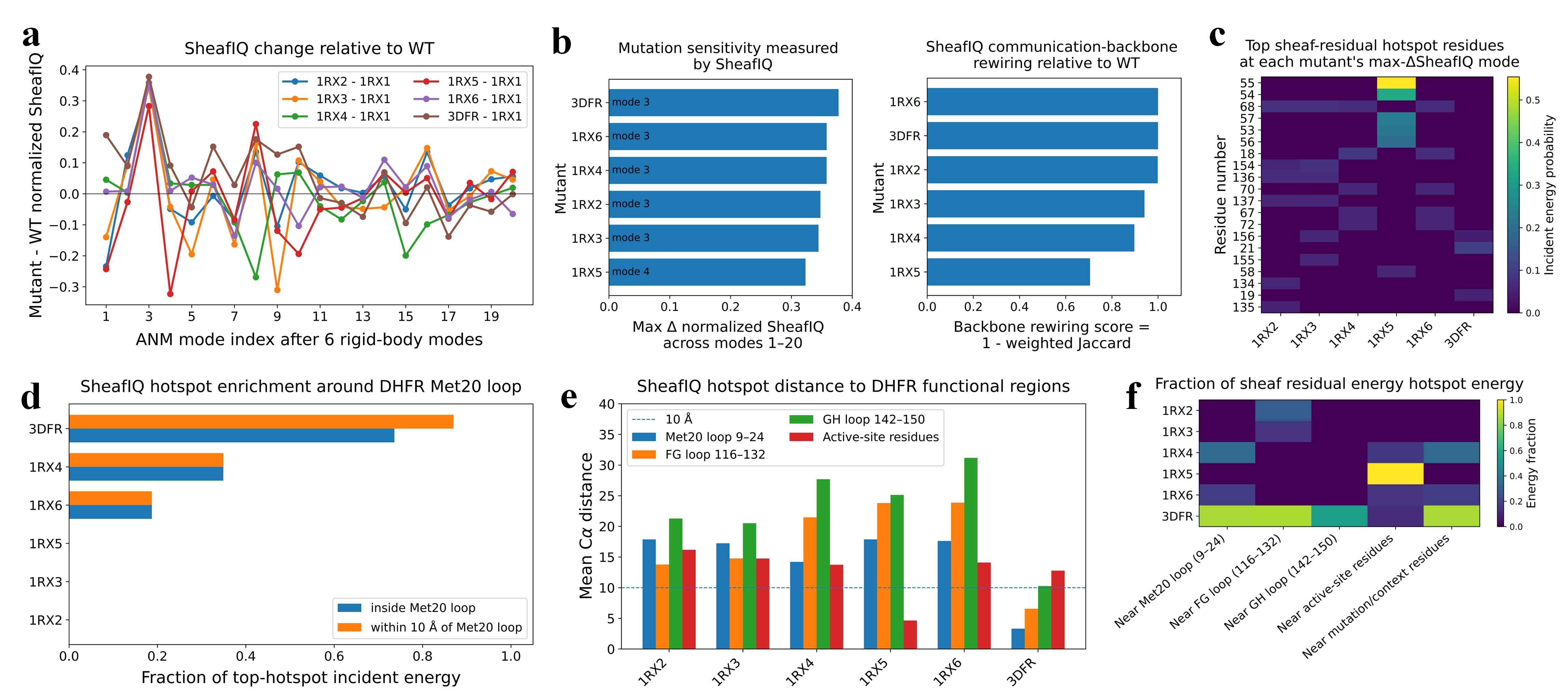}
	\caption{\textbf{SheafIQ identifies variant-specific structural reorganization and functionally relevant hotspots in DHFR.} 
(a) SheafIQ changes of different DHFR variants relative to the wild type across the first 20 non-rigid ANM modes. 
(b) Variant sensitivity quantified by the maximum SheafIQ change (left) and the corresponding communication backbone rewiring score (right). 
(c) Heatmap of hotspot residues identified at the mode exhibiting the maximum $\Delta$SheafIQ for each variant. 
(d) Enrichment of hotspot residual energy within and around the DHFR Met20 loop. 
(e) Mean ${\rm C}_\alpha$ distances between hotspot residues and representative DHFR functional regions, including the Met20 loop, FG loop, GH loop, and active-site residues. 
(f) Distribution of hotspot residual energy among different functional regions across DHFR variants.}

	\label{supfig:DHFR_1} 
\end{figure}

\begin{figure}[htbp]
	\centering
	\captionsetup{font=footnotesize, justification=justified,  singlelinecheck=false, skip=3pt}
	\includegraphics[width=0.9\textwidth]{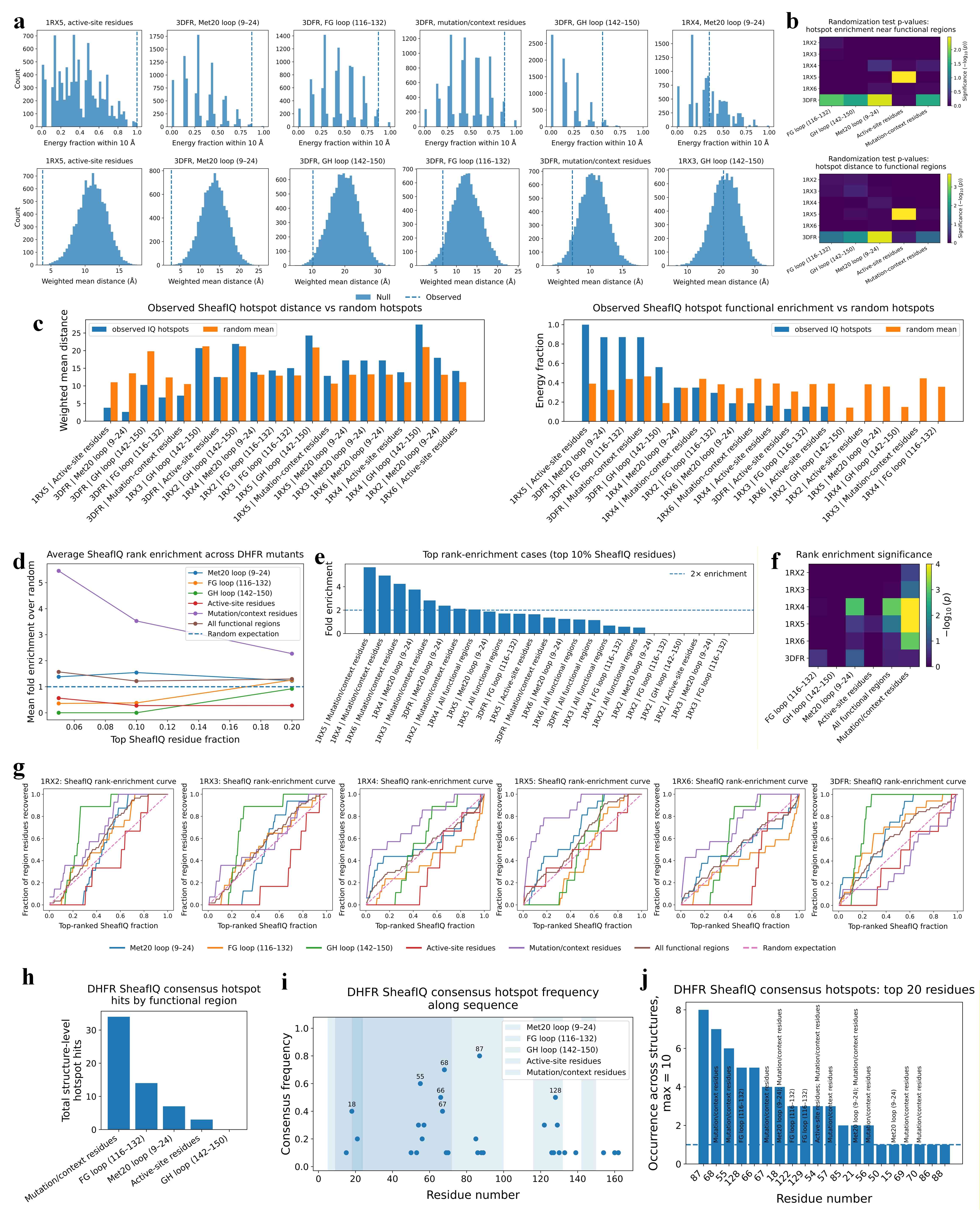}
	\caption{\textbf{Statistical validation and consensus analysis of SheafIQ hotspots across DHFR structures.}
(a) Representative randomization tests comparing the observed hotspot enrichment and hotspot distances with randomly sampled residue sets.
(b) Heatmaps summarizing the statistical significance of hotspot enrichment (top) and hotspot proximity to functional regions (bottom) across DHFR variants.
(c) Comparison of observed SheafIQ hotspots and randomly sampled hotspots in terms of distances to functional regions (left) and hotspot energy fractions (right).
(d) Average rank enrichment of functional regions among the top-ranked SheafIQ residues across DHFR variants.
(e) Representative cases exhibiting the strongest rank enrichment.
(f) Statistical significance of rank enrichment across different functional regions.
(g) Rank-enrichment curves showing the recovery of functional residues as increasingly larger fractions of the top-ranked SheafIQ residues are considered.
(h) Total consensus hotspot occurrences across different DHFR functional regions.
(i) Consensus frequencies of hotspot residues along the DHFR sequence.
(j) Twenty most frequently detected consensus hotspot residues across DHFR structures.}
	\label{supfig:DHFR_2} 
\end{figure}

\begin{figure}[htbp]
	\centering
	\captionsetup{font=footnotesize, justification=justified,  singlelinecheck=false, skip=3pt}
	\includegraphics[width=0.9\textwidth]{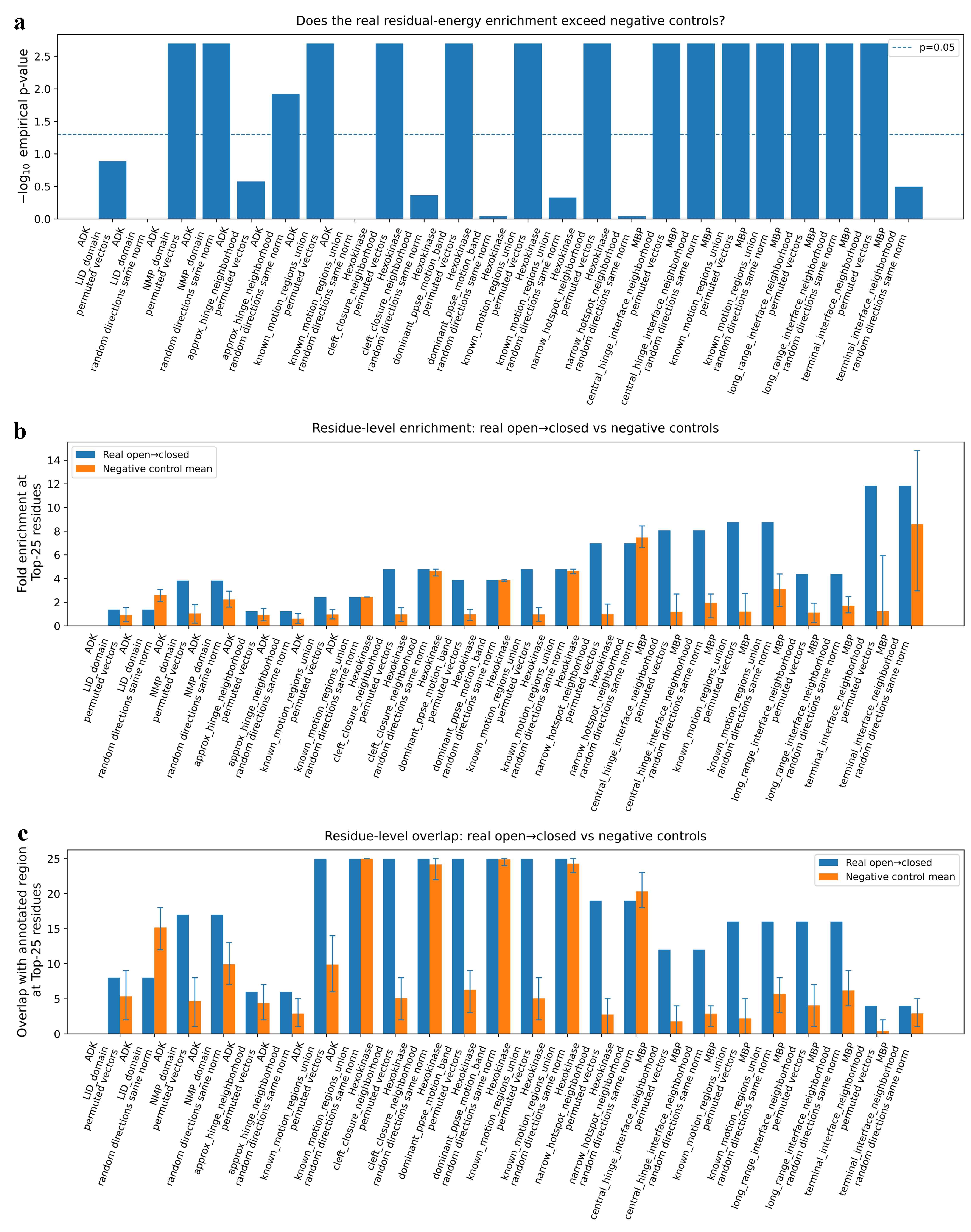}
	\caption{
\textbf{Negative-control validation of SheafIQ hotspot enrichment for protein conformational changes.} (a) Empirical significance of hotspot enrichment for the real conformational transitions relative to the corresponding negative controls. Bars represent $-\log_{10}$ empirical $p$-values, and the dashed line indicates the significance threshold ($p=0.05$).
(b) Comparison of fold enrichment for the Top-25 sheaf residual hotspots between the real open-to-closed conformational transitions and the two negative controls. Error bars denote the 5th--95th percentile range across negative-control realizations.
(c) Comparison of the numbers of Top-25 hotspot residues overlapping annotated conformationally important regions between the real conformational transitions and the two negative controls. Error bars denote the 5th--95th percentile range across negative-control realizations.
}
	\label{supfig:molecular_conformation_2} 
\end{figure}

\begin{figure}[htbp]
	\centering
	\captionsetup{font=footnotesize, justification=justified,  singlelinecheck=false, skip=3pt}
	\includegraphics[width=0.95\textwidth]{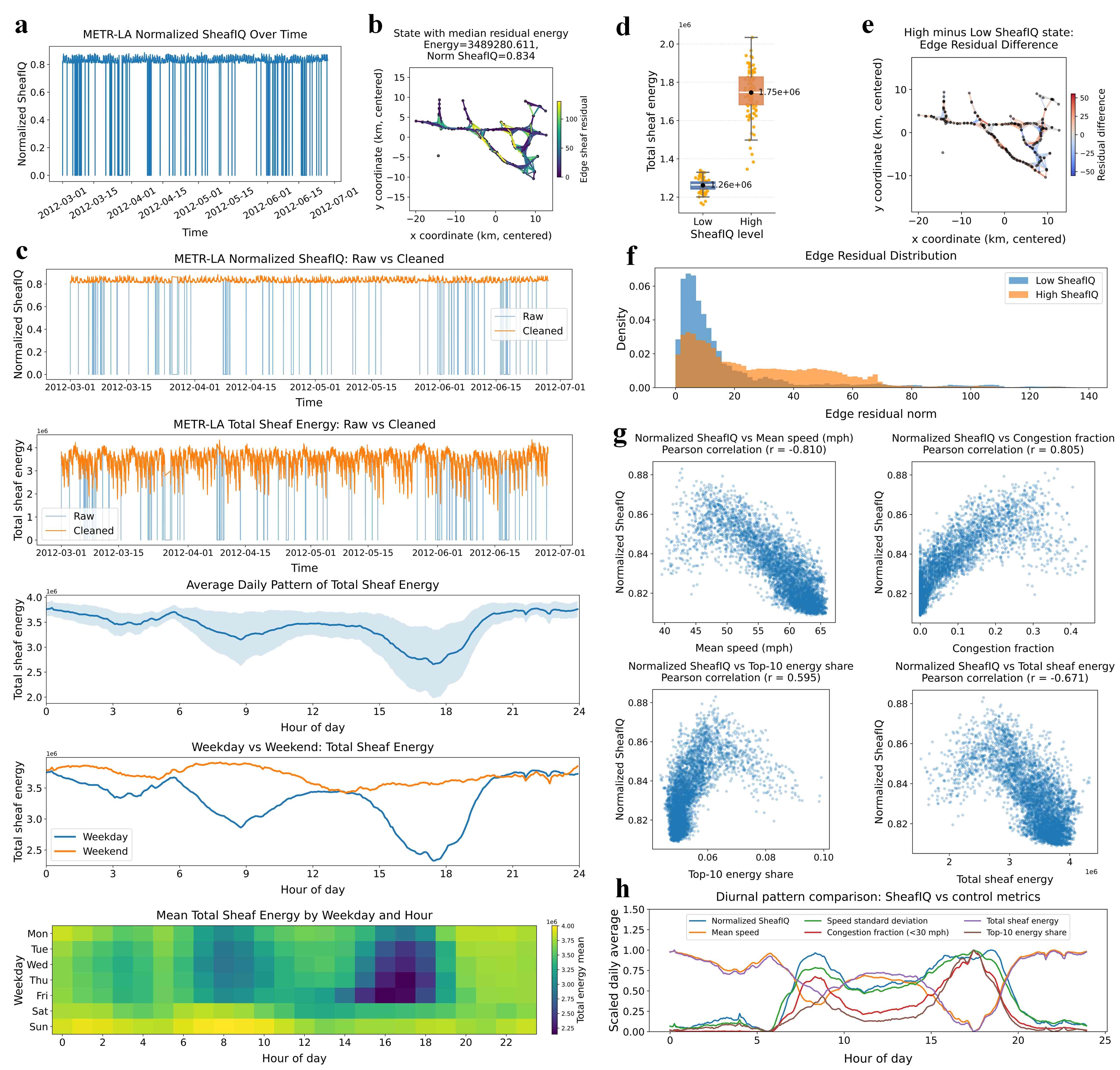}
	\caption{\textbf{Additional analyses of SheafIQ on the METR-LA traffic network.}
(a) Temporal evolution of normalized SheafIQ.
(b) Spatial distribution of edge residuals for a representative high-SheafIQ state.
(c) Comparison of raw and cleaned normalized SheafIQ and total sheaf residual energy, together with average daily profiles, weekday--weekend patterns, and weekday--hour heatmaps of total sheaf residual energy.
(d) Comparison of total sheaf residual energy between low- and high-SheafIQ states.
(e) Difference in edge residuals between representative high- and low-SheafIQ states.
(f) Edge residual distributions for low- and high-SheafIQ states.
(g) Scatter plots showing the relationships between normalized SheafIQ and representative traffic metrics.
(h) Comparison of normalized daily patterns between SheafIQ and conventional traffic descriptors.}
	\label{supfig:METR-LA_2} 
\end{figure}

\begin{figure}[htbp]
	\centering
	\captionsetup{font=footnotesize, justification=justified,  singlelinecheck=false, skip=3pt}
	\includegraphics[width=0.95\textwidth]{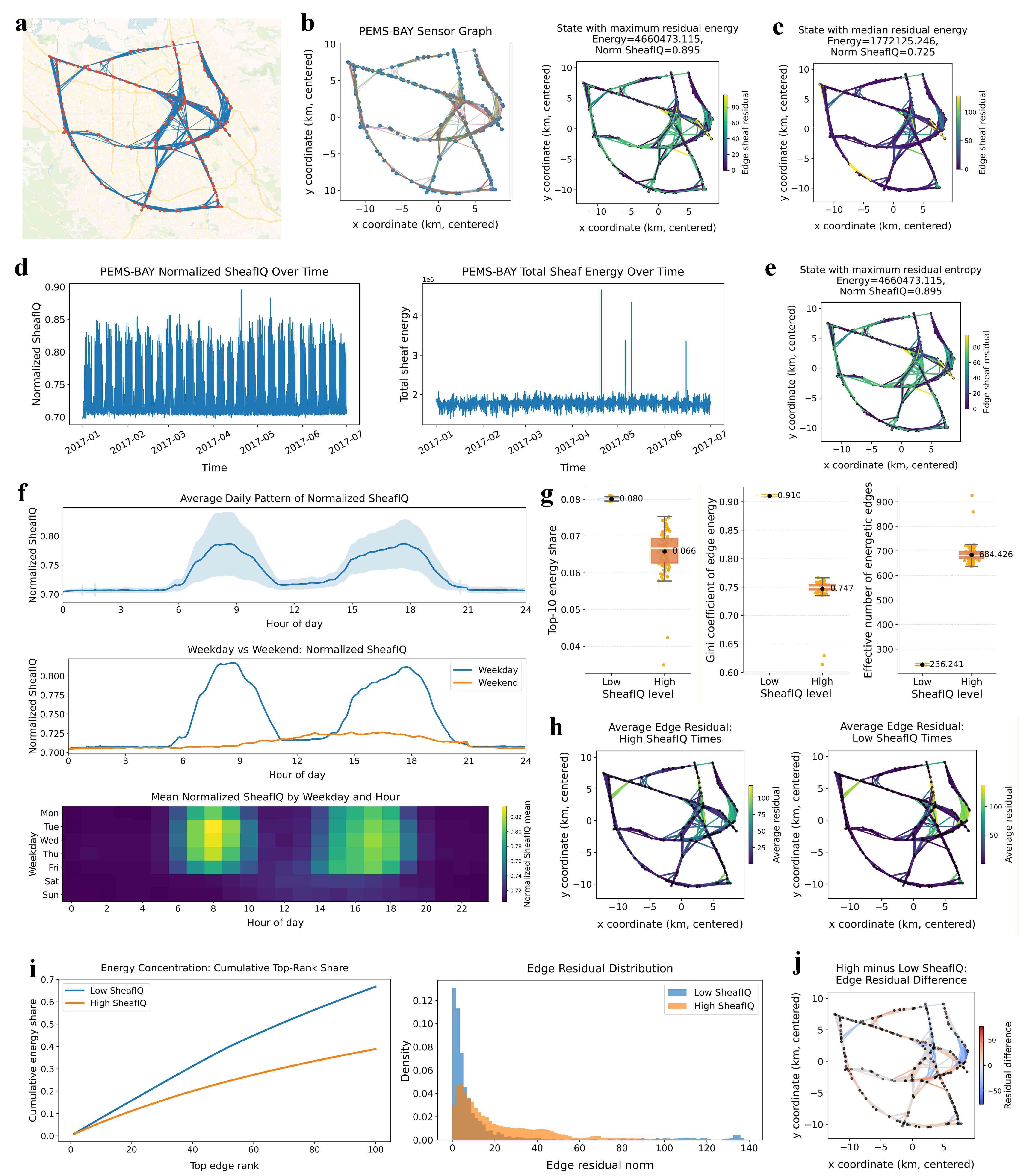}
	\caption{\textbf{Spatial organization analysis of SheafIQ on the PEMS-BAY traffic dataset.}
(a) Road network and sensor locations.
(b) Graph representation of the traffic network together with the traffic state exhibiting the maximum residual energy.
(c) Traffic state with median residual energy.
(d) Time series of normalized SheafIQ and total sheaf residual energy.
(e) Traffic state with maximum residual entropy.
(f) Temporal patterns of normalized SheafIQ, including the average daily profile, weekday--weekend comparison, and weekday--hour heatmap.
(g) Comparison of spatial organization metrics between high- and low-SheafIQ states, including the Top-10 energy share, Gini coefficient of edge energy, and effective number of energetic edges.
(h) Average edge residual maps for high- and low-SheafIQ states.
(i) Cumulative energy concentration curves and edge residual distributions for high- and low-SheafIQ states.
(j) Difference map of average edge residuals between high- and low-SheafIQ states.}
	\label{supfig:PEMS-BAY_1} 
\end{figure}

\begin{figure}[htbp]
	\centering
	\captionsetup{font=footnotesize, justification=justified,  singlelinecheck=false, skip=3pt}
	\includegraphics[width=0.95\textwidth]{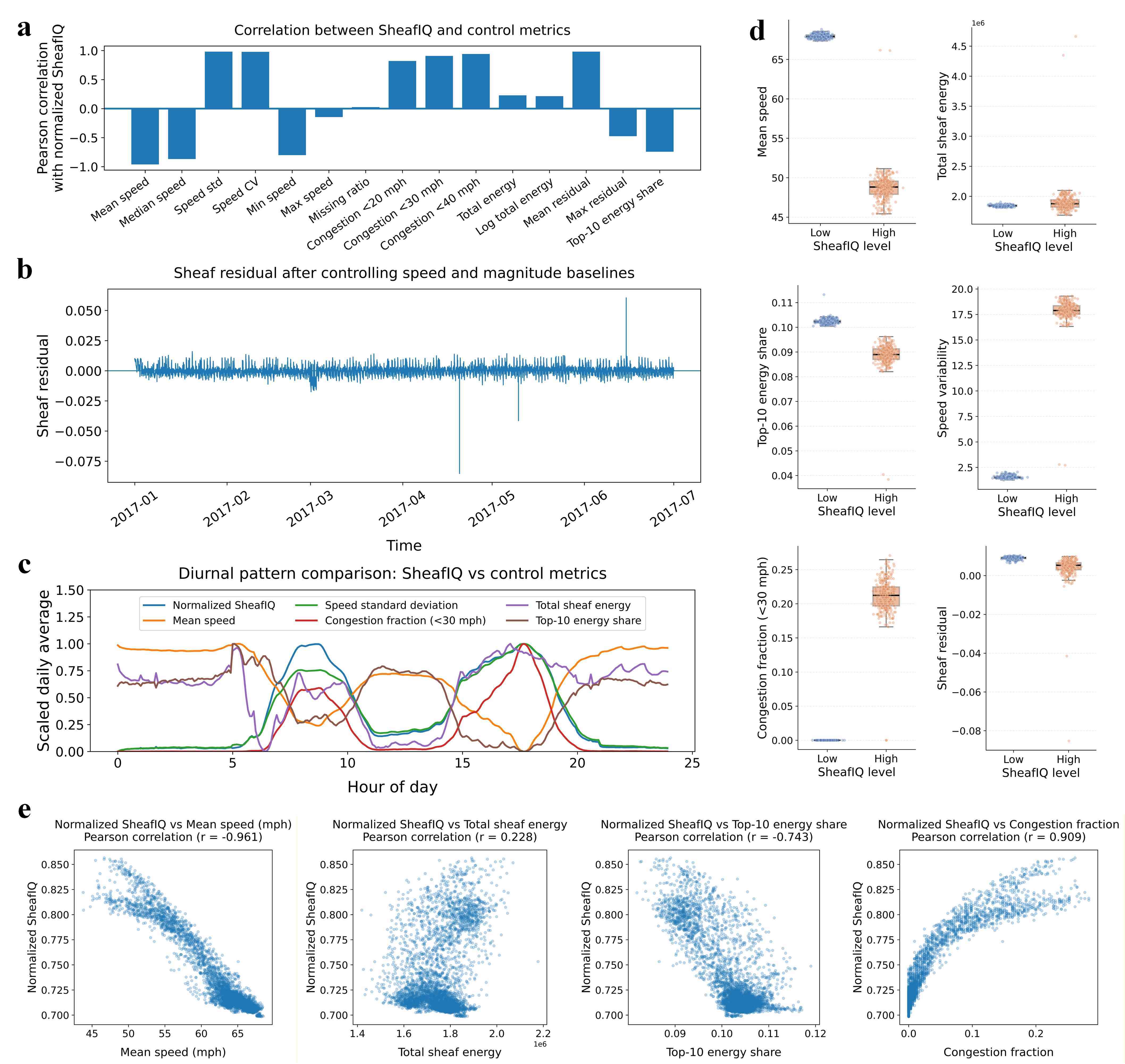}
	\caption{\textbf{Control-metric analyses of SheafIQ on the PEMS-BAY traffic network.}
(a) Pearson correlations between normalized SheafIQ and representative traffic and residual-based control metrics.
(b) Residual component of normalized SheafIQ after regressing out speed- and magnitude-related baseline variables.
(c) Comparison of average daily patterns between normalized SheafIQ and representative traffic metrics.
(d) Comparison of traffic statistics between high- and low-SheafIQ level.
(e) Scatter plots showing the relationships between normalized SheafIQ and representative control metrics.}
	\label{supfig:PEMS-BAY_2} 
\end{figure}

\begin{figure}[htbp]
	\centering
	\captionsetup{font=footnotesize, justification=justified,  singlelinecheck=false, skip=3pt}
	\includegraphics[width=0.7\textwidth]{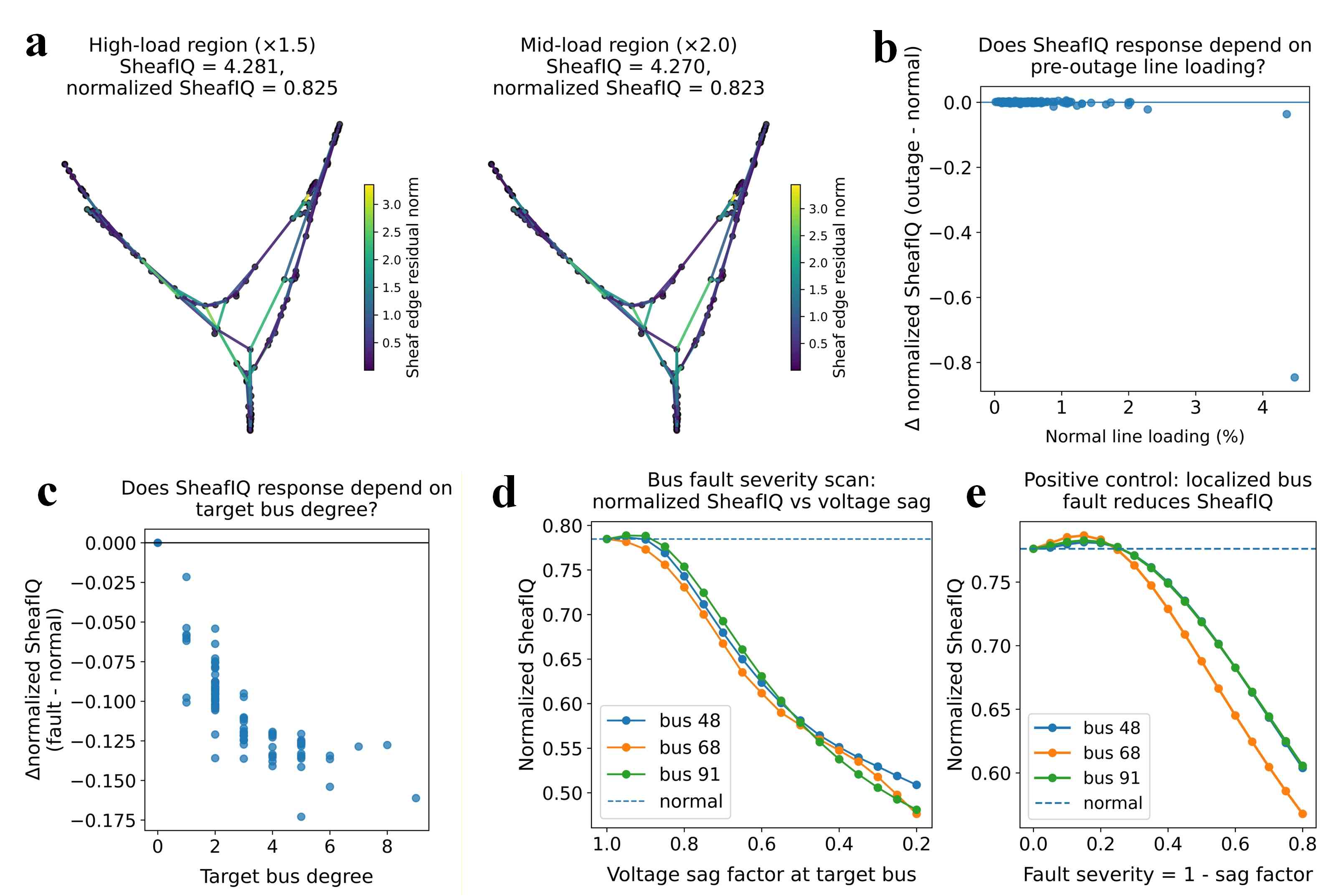}
	\caption{
\textbf{Additional analyses of SheafIQ on the IEEE 118-bus power grid.}
(a) Additional examples of topology-preserving regional load perturbations with corresponding edge residual distributions.
(b) Relationship between normalized SheafIQ changes induced by transmission-line outages and pre-outage line loading.
(c) Relationship between normalized SheafIQ changes induced by localized bus faults and target bus degree.
(d) Normalized SheafIQ under increasing fault severity for additional representative buses.
(e) Comparison between localized bus faults and regional load perturbations under representative operating conditions.
}
	\label{supfig:IEEE118_2} 
\end{figure}

\begin{figure}[htbp]
	\centering
	\captionsetup{font=footnotesize, justification=justified,  singlelinecheck=false, skip=3pt}
	\includegraphics[width=0.95\textwidth]{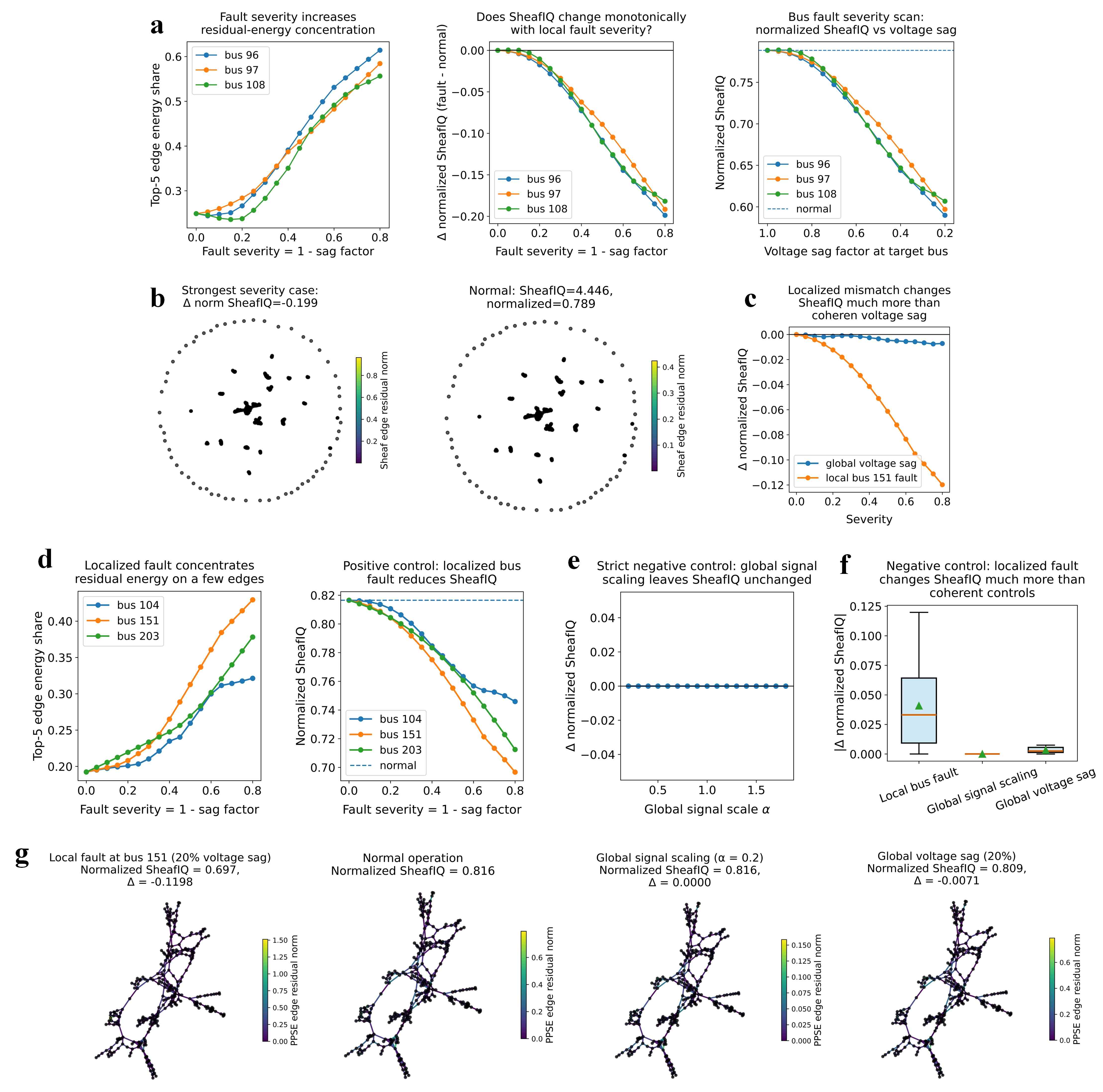}
	\caption{
\textbf{Generalization of SheafIQ analysis on the IEEE300 power grid.}
(a) Responses of normalized SheafIQ, the Top-5 edge energy share, and normalized SheafIQ under increasing localized fault severity for three representative buses.
(b) Comparison of edge residual distributions between normal operation and the strongest localized fault.
(c) Comparison between localized bus faults and coherent global voltage sag, showing that localized perturbations produce substantially larger changes in normalized SheafIQ.
(d) Changes in the Top-5 edge energy share and normalized SheafIQ under increasing localized fault severity for three additional representative buses.
(e) Strict negative control showing that normalized SheafIQ remains unchanged under coherent global signal scaling.
(f) Comparison of absolute normalized SheafIQ changes induced by localized bus faults, coherent global signal scaling, and coherent global voltage sag.
(g) Representative residual maps for localized bus fault, normal operation, coherent global signal scaling, and coherent global voltage sag.
}
	\label{supfig:IEEE300} 
\end{figure}

\end{document}